\newcommand {\R}{\mathbb{R}}
\newcommand{\calD}{\mathcal{D}}
\newcommand{\calE}{\mathcal{E}}
\newcommand{\calS}{\mathcal{S}}
\DeclareMathAlphabet{\mathdutchcal}{U}{dutchcal}{m}{n}
\DeclareMathOperator{\rank}{rank}
\newcommand{\M}{\mathcal{M}}
\DeclareMathOperator{\Lie}{Lie}
\DeclareMathOperator*{\argmin}{argmin}
\numberwithin{equation}{section}
\newcommand*{\defeq}{\mathrel{\vcenter{\baselineskip0.5ex \lineskiplimit0pt
                     \hbox{\scriptsize.}\hbox{\scriptsize.}}}%
                     =}
\begin{document}

\title{Principal subbundles for dimension reduction}

\author{Morten Akhøj$^{1,2, \ast}$ \and James Benn$^{1}$ 
 \and Erlend Grong$^{3}$ \and
Stefan Sommer$^{2}$  \and Xavier Pennec$^{1}$}
\authorrunning{M. Akhøj et al.}

\institute{
$^{\ast}$ Corresponding author
\\
$^1$ Université Côte d'Azur and INRIA, Sophia Antipolis, France  \\
$^2$ Department of Computer Science, University of Copenhagen, Denmark
\\
$^3$ Department of mathematics, University of Bergen, Norway
\\
\email{\{morten.pedersen, james.benn, xavier.pennec\}@inria.fr, sommer@di.ku.dk, Erlend.Grong@uib.no} 
}

\date{Submitted July 6, 2023}

\maketitle

\begin{abstract} 
In this paper we demonstrate how sub-Riemannian geometry can be used for manifold learning and surface reconstruction by combining local linear approximations of a point cloud to obtain lower dimensional bundles.
Local approximations obtained by local PCAs are collected into a rank \textit{k} tangent subbundle on $\mathbb{R}^d$, $k<d$, which we call a principal subbundle. This determines a sub-Riemannian metric on $\R^d$. We show that sub-Riemannian geodesics with respect to this metric can successfully be applied to a number of important problems, such as: explicit construction of an approximating submanifold $M$, construction of a representation of the point-cloud in $\R^k$, and computation of distances between observations, taking the learned geometry into account. The reconstruction is guaranteed to equal the true submanifold in the limit case where tangent spaces are estimated exactly. Via simulations, we show that the framework is robust when applied to noisy data. Furthermore, the framework generalizes to observations on an a priori known Riemannian manifold.
    \end{abstract}

\section{Introduction}
This paper presents a framework for learning an unknown, lower dimensional geometry from a set of observations $\{x_1, \dots, x_N\}$ in $\R^d$, or more generally on a Riemannian manifold. In our presentation we will assume $\R^d$-valued data unless otherwise specified - the case of manifold-valued data is presented in Section \ref{sect_manif_valued_case}. The framework provides concrete methods for solving the following three problems, 

\begin{figure}[h]
\centering
\includegraphics[scale=0.3]{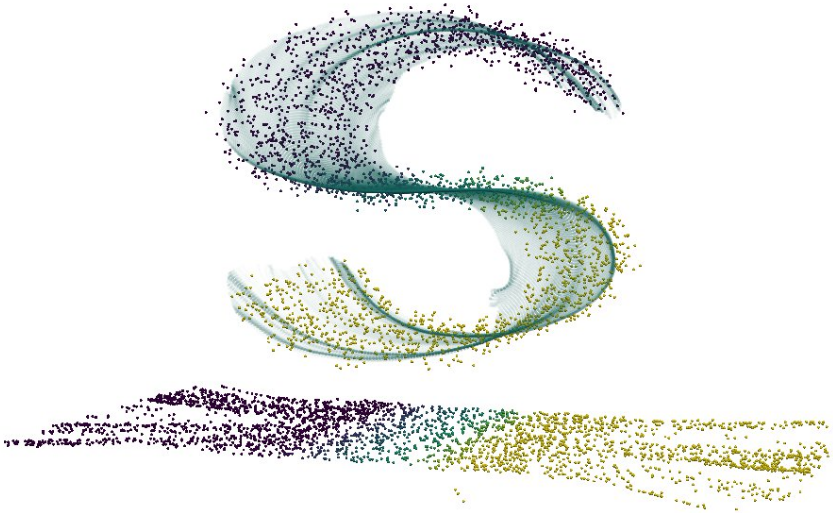}
\caption{\textit{Top}: noisy observations on the S-surface, embedded in $\R^{100}$ but projected to $\R^3$ for the purpose of visualization. The turquoise surface shows the 2-dimensional manifold reconstructed using the principal subbundle. \textit{Below}: a 2D tangent space representation of the observations. Experiment described further in Section \ref{sect_s_surface}.}
\label{fig_s_surface}
\end{figure}

\begin{enumerate}[\rm (A)]
    \item \textit{Metric learning}, i.e. learning a distance metric, $d(\cdot, \cdot) : \R^d \times \R^d \to \R_{\geq0}$, expressing the unknown underlying geometry (see \cite{bellet2015metric} for an overview). \vspace{3mm}
    \item \textit{Manifold reconstruction}, i.e. estimating a $k$-dimensional smooth submanifold $M\subset \R^d$ around which the data might be assumed to be distributed, an assumption known as the \textit{manifold hypothesis} \cite{cayton2005algorithms}. This includes surface reconstruction for observations in $\R^3$ \cite{surveySurfacReconstructionHuang}. \vspace{3mm}
    \item \textit{Dimension reduction}, in the specific sense of learning a representation of the data in $\R^k$, $k < d$, that preserves various chosen local properties, e.g. pairwise distances and angles between neighbouring points. This problem is often called \textit{manifold learning} \cite{ma2012manifold}, referring to the fact that the manifold hypothesis is often assumed, although most such methods do not reconstruct the manifold in $\R^d$.
\end{enumerate}

Each of these problems constitutes a whole field of research in itself. Indeed, their assumptions on the data can differ; while methods for (B) and (C) assume a lower dimensional structure of the data, this is not necessarily the case in (A). However, the framework described in this paper can be used to do both (A), (B) and (C). Our basic assumption is that the data is locally linear, i.e. locally well approximated by $k$-dimensional affine linear subspaces. This assumption holds under the manifold hypothesis, where the tangent space at each point is a good approximation. 
However, the assumption may also hold even if the manifold hypothesis fails, due to the phenomenon of \textit{non-integrability} (see Section \ref{sect_integrability}). In this sense, the framework of principal subbundles relaxes the manifold assumption.

At each point in $\R^d$ we estimate a \textit{k}-dimensional linear approximation of the data by an eigenspace of a local principal component analysis (PCA). Technically, the collection of these eigenspaces forms a \textit{subbundle} on $\R^d$. In this work we exploit the fact that such a subbundle determines a \textit{sub-Riemannian metric} on $\R^d$. Under such a metric a curve in $\R^d$ has finite length if and only if it is \textit{horizontal}, i.e. if its velocity vector lies within the subbundle at all time points. Due to the nature of the chosen subbundle, a horizontal curve initialized within the point cloud is expected to evolve along the point cloud. Thus, our framework provides a method for \textit{metric learning} (A) in the sense that it estimates a sub-Riemannian metric on $\R^d$, which, under certain assumptions, induces a distance metric on $\R^d$. In particular, it is a geodesic distance, meaning that the distance between $p,q\in\R^d$ equals the length of the shortest horizontal curve connecting $p$ and $q$. A sub-Riemannian metric can be thought of as a  Riemannian metric of lower rank $k\leq d$. To the best of our knowledge,  the low-rank (i.e. sub-Riemannian) case has not yet been explored in Riemannian approaches to metric learning (e.g. \cite{hauberg2012geometric}, \cite{perraul2013non}). But it is exactly this property that enables the metric to also provide solutions to problems (B) and (C). It yields a method for \textit{manifold reconstruction} (B) since the sub-Riemannian metric induces a diffeomorphism, $\phi_{\mu} : \R^k \supset U \to \phi_{\mu}(U) \subset \R^d$, whose image is a smooth \textit{k}-dimensional submanifold $M^k$ approximating the data around a chosen base point $\mu \in \R^d$. Technically, $\phi_{\mu}$ is a restriction of the sub-Riemannian exponential map at $\mu$. Finally, the framework yields a method for dimension reduction (C) since $U \subset \R^k$ is a coordinate chart for the manifold, so that, after projection of the observations to $M^k$, each projected observation $x_i$ can be represented as $\phi_{\mu}^{-1}(x_i) \in \R^k$.

Methods for manifold reconstruction (B) and dimension reduction (C)) often deal with the problem of how to combine local linear approximations into a global, non-linear representation. In the field of surface reconstruction from 3D point clouds, state-of-the-art methods such as \textit{Poisson surface reconstruction} (PSR)~\cite{kazhdan2006poisson} and \textit{Implicit Geometric Regularization} (IGR) \cite{gropp2020implicit}  are based on estimation of tangent spaces, which is done via estimation of normals (see \cite{surveySurfacReconstructionHuang} for a survey and benchmarking). A fundamental obstacle to this strategy of reconstructing a submanifold from tangent space approximations, e.g. reconstructing a surface from a normal field, is that the subspaces determine a submanifold if and only if they form an \textit{integrable} subbundle, cf. the Frobenius theorem (see Section \ref{sect_integrability} below). If the subspaces are estimated from a (finite) set of observations, integrability cannot be assumed to hold, even in the absence of noise. 
PSR and IGP deal with this problem by finding a surface whose normals minimize the distance to the empirical (noisy) normals. This surface is constructed by solving a Poisson equation (PSR) or by fitting a neural network (IGR). However, the approach of fitting normals does not generalize to the case of codimension greater than one, since normals are not defined in this case. Likewise, within manifold learning, methods based on alignments of local linear approximations (e.g. \cite{teh2002automatic}, \cite{zhang2004principal}, \cite{singerVectorDiffusionMaps}, \cite{koelle2022manifold}, \cite{myhre2020generic}), can be thought of as different ways to deal with non-integrability. Such methods are often based on eigendecomposition of a kernel-type matrix, or other linear-algebraic computations. This strategy is useful for finding a representation in $\R^k$ (problem (C)) but not for reconstructing an underlying manifold (problem (B)). The approach presented in this paper is different. We combine the local linear approximations into a global representation by integrating a system of second-order ordinary differential equations, the  sub-Riemannian geodesic equations. For $k=1$, this integration yields the flow of the first eigenvector field, called the \textit{principal flow} in \cite{panaretos2014principal}. There are, however, important differences between principal flows and our framework for $k=1$, see the discussion in Section \ref{sect_rel_to_principal_flows} and numerical results in Section \ref{sect_applications_manif_data}. A follow-up work to the principal flows is the \textit{Principal submanifolds} \cite{princSubmYaoEltzner}, where the aim is to leverage $k$ eigenvectors to construct a \textit{k}-dimensional submanifold approximating the data. This method is closely related to ours, in that it is based on horizontal curves. A crucial difference, however, is that the curves in \cite{princSubmYaoEltzner} are defined by an algorithmic procedure with no theoretical guarantees and the output of the method is a subset of the ambient space whose properties are largely unknown, such as whether it is in fact a submanifold.

A basic motivation and justification for our method is the following observation: if one had access to the true tangent spaces, e.g. via a frame of vector fields spanning them, then the Riemannian geodesic equation w.r.t. the corresponding Riemannian metric will generate an open subset (a normal chart) of the true manifold. I.e. it will generate an exact reconstruction, locally. When the frame is non-integrable, which is likely the case when it is estimated from data,  the more general sub-Riemannian framework is needed. We show that, surprisingly, we can still generate a submanifold in this setting, and thereby give solutions to problems (B) and (C). Our framework thus offers a new way to form a global representation from local linear ones that seems natural from the point of view of differential geometry.

\vspace{3mm}

\noindent \textbf{Contributions and overview of the paper}

Our main contribution is the idea of collecting local PCA's subspaces into a tangent subbundle and showing how the induced sub-Riemannian structure can be used to model the data. In Section \ref{sect_principal_subbundles}, we define principal subbundles on $\R^d$ and prove smoothness properties. In Section \ref{sect_sr_geom_eucl} we present sub-Riemannian geometry on $\R^d$. A large part of this section is devoted to background theory, with some exceptions, e.g. subsection \ref{sect_exp_image_dual_subbundle} where we prove that a certain restriction of the sub-Riemannian exponential map is a diffeomorphism, thus generating a submanifold even if the subbundle is non-integrable. This is the crucial result showing the usefulness of sub-Riemannian geometry for manifold reconstruction. In Section \ref{sect_sr_geometry_of_principal_subbundle} we discuss the particular sub-Riemannian geometry induced by the principal subbundle. In Section~\ref{sect_manif_valued_case} we show that the framework generalizes to the case of observations on an a priori known Riemannian manifold. Section \ref{sect_applications} presents numerical solutions to examples of problems (A) (metric learning), (B) (manifold reconstruction) and (C) (dimension reduction) for observations in $\R^d$ and on the sphere.


\section{Principal subbundles}\label{sect_principal_subbundles}

In this section, we define the \textit{principal subbundle} as a collection of eigenspaces of local PCAs. Recall that the tangent bundle on $\R^d$, $T\R^d$, can be identified with $\R^d \times \R^d$. For some subset $U \subset \R^d$, the tangent bundle on $U$, $TU \subset T\R^d$, can be identified with $U \times \R^d$. A rank \textit{k} subbundle $\calD$ of $TU$ is a collection of \textit{k}-dimensional subspaces associated to points in $U$,  that is $$\calD = \{(x,v) \hspace{0.5mm}\vert\hspace{0.5mm} x\in U, v\in\calD_x \},$$ where each $\calD_x$ is a \textit{k}-dimensional subspace of $\R^d$. Given a data set $\{x_i\}_{i=1..N} \subset \R^d$, we will define the principal subbundle as the subbundle for which each $\calD_x$ is the span of the first \textit{k} eigenvectors of a centered local PCA computed at $x \in \R^d$. We detail this construction below.

\subsection{Local PCA at the local mean}

Let $x_1, \dots, x_N$ be observations in $\mathbb{R}^d$. By \textit{local PCA} at $p\in\R^d$ we mean the extraction of eigenvectors of the following weighted and centered second moment.

\begin{definition}[Weighted, centered first and second moments]\label{def_weighted_2nd_moment} 
Let $K_{\alpha} : \mathbb{R}_{\geq 0} \to \mathbb{R}_{\geq 0}$ be a smooth, decaying 
 kernel function with range parameter $\alpha > 0$. 
At a point $p \in \mathbb{R}^d$, the normalized weight of observation $x_i$ is 
\[
w_i(p) := \frac{ K_\alpha(\Vert x_i - p \Vert)}{\sum_{j=1}^N K_\alpha(\Vert x_i - p \Vert)} ,\]
where $\Vert \cdot \Vert$ is the standard norm on $\R^d$. The weighted first moment (the local mean) and the centered weighted second moment (the local covariance matrix) are then:
\begin{align*}
m(p) = \sum_{i=1}^N w_i(p) \: x_i \quad , \quad 
\Sigma_{\alpha}(p) := \sum_{i=1}^{N} w_i(m(p)) (x_i - m(p)) (x_i - m(p))^T \in \mathbb{R}^{d \times d}.
\end{align*}
\end{definition}
\begin{remark}
    To save computational time, instead of using $w_i(m(p))$ in $\Sigma_\alpha(p)$ we suggest to use $w_i(p)$, i.e. not recomputing the weights at $m(p)$. This cheaper version is used for the experiments in Sections \ref{sect_surface_recon}-\ref{sect_learning_distance_metric}.
\end{remark}

For $K_{\alpha}$ constantly equal to 1 (or $\alpha = \infty$), $\Sigma_{\alpha}(p)$ is the ordinary mean-centered covariance matrix, independent of \textit{p}. In our experiments we use a gaussian kernel with standard deviation $\alpha$. A motivation for using local PCA's is the following. Under the manifold hypothesis, with an underlying manifold of dimension $k$, the \textit{k}-dimensional eigenspace of a local PCA at an observation $x_i$ converges to the true tangent space of that submanifold at $x_i$ in the limit of zero noise and the number of observations going to infinity (see e.g. \cite{singerVectorDiffusionMaps}, Theorem B.1, for a convergence result).

\subsection{Eigenvector fields and the principal subbundle}\label{section_eigenbundle}

We define the principal subbundle at $p \in \R^d$ as a \textit{k}-dimensional eigenspace of the weighted second moment at \textit{p}. For it to be well-defined at $p$, the \textit{k}'th and \textit{k+1}'th eigenvalues of the second moment at \textit{p} should be different. I.e. the subbundle is defined only outside the following set of points, which we will call singular,
\begin{equation}\label{eq_singular_set}
\mathcal{S}_{\alpha, k} \defeq \left\{p \in \R^d \hspace{1mm} \middle \vert  \hspace{1mm}  \lambda_{k}(p) = \lambda_{k+1}(p) \right\}, \quad 1 \leq k \leq d
\end{equation}
where $\lambda_1(p) \geq \dots \geq \lambda_d(p)$ are the eigenvalues of $\Sigma_{\alpha}(p) \in \R^{d\times d}$.

\begin{definition}[Principal subbundle]\label{def_PS} Let $\lambda_1(p) \geq \dots \geq \lambda_d(p)$ be the eigenvalues of $\Sigma_{\alpha}(p) \in \R^{d\times d}$
with associated eigenvectors $e_1(p),\dots, e_d(p)$. 
Let $\mathcal{S}_{\alpha, k}$ be the set of singular points (Eq. (\ref{eq_singular_set})). 
Then the \emph{principal subbundle} on $\R^d \setminus \mathcal{S}_{\alpha, k}$ is defined as $$\mathcal{E}^{k, \alpha} = \left\{(p,v) \hspace{1mm}\vert\hspace{1mm} p\in \R^d \setminus \mathcal{S}_{\alpha, k}, v\in \text{span}\{e_1(p),\dots, e_k(p)\}\right\} \subset T (\R^d \setminus \mathcal{S}_{\alpha, k}).$$
\end{definition}

\begin{remark}
    We consider it an assumption on the data, and the chosen parameters, that $\lambda_k(p) \neq \lambda_{k+1}(p)$ at all points where we want to evaluate the principal subbundle. In our  computations we have not encountered points where the assumption was violated. 
\end{remark}
\begin{remark}
    Cf. the proof of Proposition \ref{prop_smooth_princ_subb} (below), if $\lambda_k(p) > \lambda_{k+1}(p)$ at some $p \in \R^d$, then this property holds on an open set around $p$.
\end{remark}

Note that the principal subbundle only depends on the \emph{eigenspaces}, not the choice of eigenvectors. The latter are not uniquely determined, they depend on a choice of sign and, in the case of repeated eigenvalues, a rotation within a subspace. In order to define a sub-Riemannian structure from this subbundle it needs to be smooth, which is satisfied cf.  Proposition \ref{prop_smooth_princ_subb} below.  A closely related result, Lemma \ref{lem_eigenvectorfields} below, states that if an eigenvalue $\lambda'$ at $p\in \R^d$ has multiplicity 1, then there exists a smooth vector field on an open subset $O \subset \R^d$ around $p$ which is an eigenvector for $\Sigma_{\alpha}(x)$ at each $x \in O$. We call this vector field an \textit{eigenvector field}.

\begin{lemma}[Existence of smooth eigenvector fields]\label{lem_eigenvectorfields} Let $e'$ be an eigenvector of $\Sigma_{\alpha}(p)$ at $p \in \R^d$ with eigenvalue $\lambda'$ of multiplicity 1. Then there exists an open subset $O(p) \subset \R^d$ around \textit{p} and smooth maps $e : O(p) \to \R^{d}$ and $\lambda : O(p) \to \R_{\geq 0}$ satisfying $e(p) = e', \lambda(p) = \lambda'$, $\Vert e(x) \Vert = 1$ and $\Sigma_{\alpha}(x) e(x)= \lambda(x) e(x)$ for all $x \in O(p)$. \end{lemma}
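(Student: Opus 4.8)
The plan is to apply the implicit function theorem to the system that encodes the statement ``$v$ is a unit eigenvector of $\Sigma_\alpha(x)$ with eigenvalue $\mu$''. First I would record the smoothness of the map $x \mapsto \Sigma_\alpha(x) \in \R^{d\times d}$: since $K_\alpha$ is smooth and (for the Gaussian, or any strictly positive kernel) the normalizing denominators in Definition \ref{def_weighted_2nd_moment} never vanish, each weight $w_i$ is a smooth function of its base point, hence so are the local mean $m$ and, in turn, the entries of $\Sigma_\alpha$. I would also note that $\Sigma_\alpha(x)$ is a sum of outer products $(x_i - m(x))(x_i-m(x))^T$, hence symmetric and positive semidefinite; in particular all its eigenvalues are $\geq 0$, which is what will give the claimed range $\lambda \colon O(p) \to \R_{\geq 0}$.

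Next, define $F : \R^d \times \R^d \times \R \to \R^{d} \times \R$ by
\[
F(x, v, \mu) = \bigl(\Sigma_\alpha(x)v - \mu v,\ \|v\|^2 - 1\bigr).
\]
This is smooth by the previous step, and $F(p, e', \lambda') = 0$. The core of the argument is that the partial differential $D_{(v,\mu)}F(p,e',\lambda')$, which sends $(w,\eta)$ to $\bigl((\Sigma_\alpha(p) - \lambda' I)w - \eta\, e',\ 2\langle e', w\rangle\bigr)$, is a linear isomorphism of $\R^{d+1}$. Being an endomorphism of a finite-dimensional space, it suffices to check injectivity: if the image vanishes, pairing the first component with $e'$ and using that $\Sigma_\alpha(p) - \lambda' I$ is self-adjoint with $e'$ in its kernel yields $\eta\, \|e'\|^2 = 0$, so $\eta = 0$; then $w$ lies in the $\lambda'$-eigenspace, which by the multiplicity-one hypothesis equals $\R e'$, and the second component forces $\langle e', w\rangle = 0$, hence $w = 0$.

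The implicit function theorem then furnishes an open neighbourhood $O(p)$ of $p$ and smooth maps $e : O(p) \to \R^d$, $\lambda : O(p) \to \R$ with $e(p) = e'$, $\lambda(p) = \lambda'$, and $F(x, e(x), \lambda(x)) = 0$ for all $x \in O(p)$; that is, $\|e(x)\| = 1$ and $\Sigma_\alpha(x) e(x) = \lambda(x) e(x)$. Finally $\lambda(x) = \langle \Sigma_\alpha(x) e(x), e(x)\rangle \geq 0$ because $\Sigma_\alpha(x)$ is positive semidefinite, which completes the proof.

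I expect the only genuinely load-bearing point to be the invertibility of $D_{(v,\mu)}F$: that is precisely where self-adjointness of $\Sigma_\alpha(p)$ together with simplicity of $\lambda'$ is used, and without the multiplicity-one assumption the eigenspace has dimension $>1$ and injectivity fails. Everything else — smoothness of $\Sigma_\alpha$, positivity, the bookkeeping in the IFT — is routine. (As an alternative to the IFT one could instead build $e$ from the Riesz spectral projection $P(x) = \tfrac{1}{2\pi i}\oint_\gamma (\zeta I - \Sigma_\alpha(x))^{-1}\,d\zeta$ around a small circle $\gamma$ isolating $\lambda'$, apply it to the fixed vector $e'$, and normalize; this gives the same conclusion, with the role of ``multiplicity one'' being that $P(x)$ has rank one for $x$ near $p$.)
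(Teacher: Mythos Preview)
Your proof is correct and self-contained. The implicit function theorem argument is the standard one for smooth dependence of simple eigenpairs, and your injectivity check for $D_{(v,\mu)}F$ is exactly right: self-adjointness kills the $\eta$-component, and simplicity of $\lambda'$ then forces $w=0$.

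The paper, by contrast, does not give a proof at all: it simply observes that $\Sigma_\alpha$ is smooth and cites Theorem~2.3 of Sun (1985) on eigenvalue/eigenvector perturbation for the rest. So your route is genuinely different in that it is self-contained rather than deferring to the literature. What your approach buys is transparency---one sees precisely where symmetry and the multiplicity-one hypothesis enter---at the cost of a few lines of linear algebra; what the paper's citation buys is brevity and, implicitly, access to the explicit derivative formulas in Sun's paper that are later used in Section~\ref{sect_computing_geos_eucl} for differentiating the Hamiltonian. Your parenthetical remark about the Riesz projector is also a valid alternative and is closer in spirit to the holomorphic-perturbation viewpoint underlying Sun's results.
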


This result follows directly from \cite{sun1985eigenvalues}, Theorem 2.3, since $\Sigma_{\alpha}$ is a smooth map. From this result on eigen\textit{vectors}, one can conclude that the eigen\textit{spaces} are smooth at $p$ if either the eigenvalues $\lambda_1(p), \dots, \lambda_{k+1}(p)$ are distinct, or $\lambda_{k}(p), \dots, \lambda_{d}(p)$ are distinct at $p \in \R^d$. However, we can in fact show smoothness of the subbundle under the milder, indeed minimal, condition that $\lambda_{k}(p) > \lambda_{k+1}(p)$ (Proposition \ref{prop_smooth_princ_subb}). Appendix \ref{app_proofs} contains the proof of this and all other results in the paper.  

 \begin{restatable}{proposition}{propSmoothPSrd}\label{prop_smooth_princ_subb}
 The principal subbundle, defined on $\mathbb{R}^d\setminus \mathcal{S}_{\alpha, k}$, is smooth.
\end{restatable}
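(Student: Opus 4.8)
The plan is to reduce the assertion to the smoothness of a family of spectral projections. Fix $p \in \R^d \setminus \mathcal{S}_{\alpha, k}$, so that $\lambda_k(p) > \lambda_{k+1}(p)$. Because the eigenvalues of a matrix depend continuously on its entries and $\Sigma_\alpha$ is smooth (as already used in the proof of Lemma \ref{lem_eigenvectorfields}), this strict gap persists on an open neighborhood of $p$; in particular $\R^d \setminus \mathcal{S}_{\alpha, k}$ is open, and it is enough to produce a smooth rank-$k$ local frame for $\mathcal{E}^{k,\alpha}$ near each such $p$. I would get this frame from the orthogonal projection $P(x)$ onto $\mathrm{span}\{e_1(x),\dots,e_k(x)\}$: granting that $P$ is smooth in $x$ on a neighborhood $O$ of $p$, choose a basis $v_1,\dots,v_k$ of $\mathrm{im}\,P(p)$; the maps $x \mapsto P(x)v_j$ are then smooth $\R^d$-valued, and since having rank $k$ is an open condition for a $d\times k$ matrix, the vectors $P(x)v_1,\dots,P(x)v_k$ stay linearly independent, hence form a basis of the $k$-dimensional space $\mathrm{im}\,P(x)=\mathcal{E}^{k,\alpha}_x$, for all $x$ near $p$.

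The smoothness of $P$ is where I would invoke the holomorphic (Riesz--Dunford) functional calculus — precisely the tool that copes with a bare spectral gap, as opposed to Lemma \ref{lem_eigenvectorfields}, which requires $\lambda_1(p),\dots,\lambda_{k+1}(p)$ simple. Pick $c \in (\lambda_{k+1}(p), \lambda_k(p))$ and let $\gamma \subset \mathbb{C}$ be the positively oriented circle separating $\{\lambda_1(p),\dots,\lambda_k(p)\}$ from $\{\lambda_{k+1}(p),\dots,\lambda_d(p)\}$; shrinking $O$ if necessary (again by continuity of the spectrum), $\gamma$ encircles exactly $\lambda_1(x),\dots,\lambda_k(x)$ and stays off the spectrum of $\Sigma_\alpha(x)$ for every $x \in O$. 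Set
\[
P(x) \;=\; \frac{1}{2\pi i}\oint_{\gamma}\bigl(zI-\Sigma_\alpha(x)\bigr)^{-1}\,dz .
\]
Since $\Sigma_\alpha(x)$ is symmetric, its eigenvalues are real and semisimple, so standard spectral theory identifies $P(x)$ with the orthogonal projection onto the sum of the eigenspaces of the enclosed eigenvalues, i.e. onto $\mathrm{span}\{e_1(x),\dots,e_k(x)\}$. The integrand $(z,x)\mapsto (zI-\Sigma_\alpha(x))^{-1}$ is smooth wherever $z\notin \mathrm{spec}\,\Sigma_\alpha(x)$, because matrix inversion is smooth on $\GL_d$ and $\Sigma_\alpha$ is smooth; as $\gamma$ is a fixed compact contour lying in this open set over all of $O$, differentiating under the integral sign shows $x\mapsto P(x)$ is smooth on $O$.

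It remains to check that the rank is correct: $\rank P(x) = \tr P(x)$ for a projection, an integer-valued continuous function of $x$, hence locally constant, and it equals $k$ at $p$, so $\rank P(x)=k$ on a connected neighborhood of $p$. Together with the local frame construction of the first paragraph, this shows that $\mathcal{E}^{k,\alpha}$ admits a smooth rank-$k$ frame near every point of its domain, i.e. it is a smooth subbundle. The only slightly delicate points are the two uses of continuity of the spectrum — that a single contour $\gamma$ works on a whole neighborhood, and that no eigenvalue drifts across $\gamma$ — and both are immediate from the persistence of the gap $\lambda_k > \lambda_{k+1}$; everything else is routine functional calculus and linear algebra.
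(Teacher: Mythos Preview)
Your proof is correct and takes a genuinely different route from the paper's. The paper also reduces to constructing a smooth local frame (citing Lemma 10.32 in Lee), but obtains that frame by invoking perturbation-theoretic results of Sun: Theorem 3.1 of \cite{sun1990multiple} for continuous eigenvalue branches, and Theorem 3.2 of \cite{sun1990multiple} for an analytic frame spanning the invariant subspace associated to the top $k$ eigenvalues, given only the gap $\lambda_k(p)>\lambda_{k+1}(p)$. You instead build the spectral projection $P(x)$ directly via the Riesz--Dunford contour integral, prove its smoothness by differentiating under the integral, and extract a frame by pushing a fixed basis through $P(x)$.

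The trade-offs: the paper's argument is shorter on the page but outsources the real work to Sun's theorems, and in return gets an \emph{analytic} frame, slightly more than needed. Your argument is fully self-contained, uses only standard complex-analytic and linear-algebraic facts, and makes transparent exactly where the gap hypothesis $\lambda_k>\lambda_{k+1}$ enters (the contour stays off the spectrum). Functionally the two arguments are doing the same thing---Sun's results are themselves proved by closely related resolvent techniques---so neither is more general, but yours would be preferable in a setting where one wants to avoid chasing external references. One cosmetic point: you introduce $c\in(\lambda_{k+1}(p),\lambda_k(p))$ but never use it; presumably it was meant to anchor the separating contour $\gamma$, so either drop it or say explicitly that $\gamma$ is, e.g., a circle of suitable radius centred on the real axis at $c$.
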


Figure \ref{fig_ps_illustrations} illustrates the principal subbundle (blue arrows) induced by point clouds in $\R^2$ and $\R^3$, including the effect of centering the second moment at the local mean.

We are interested in studying curves whose velocity vectors are constrained to lie in the principal subbundle (i.e. eigenspaces of local PCA's). This can be done using sub-Riemannian geometry, which we introduce next.

\begin{figure}
  \centering
  \subfloat[a][]{\includegraphics[scale=0.37]{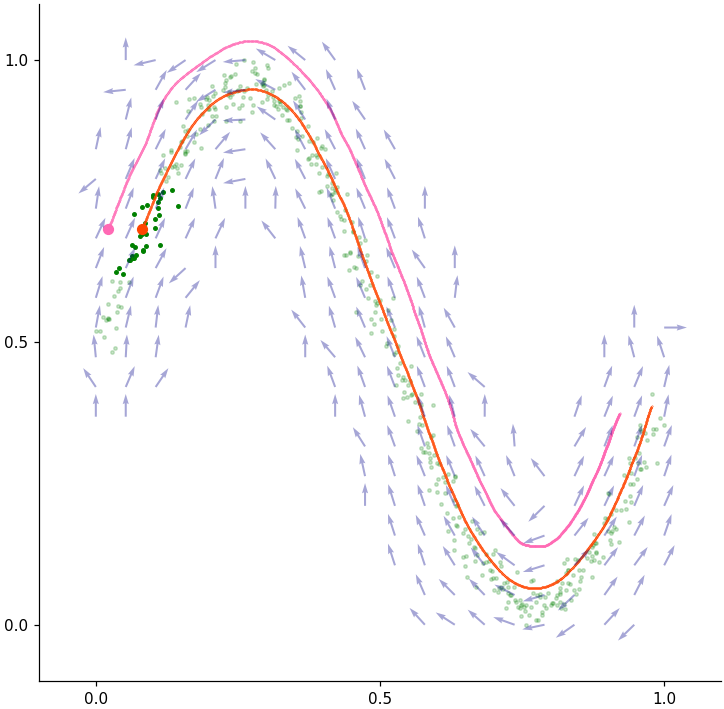}} \\
  \subfloat[b][]{\includegraphics[scale=0.35]{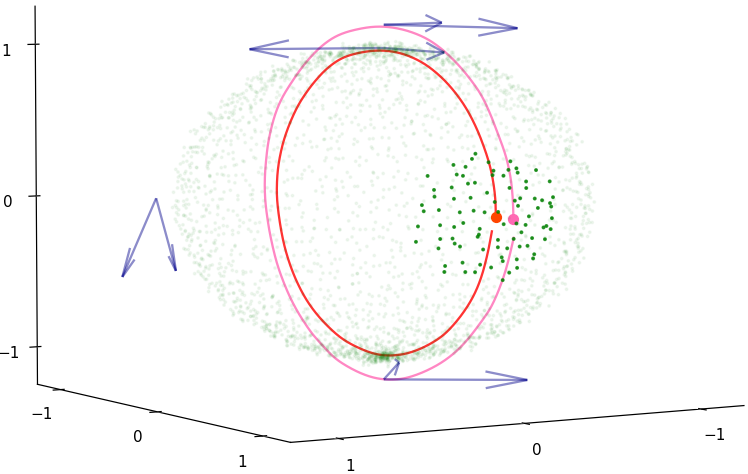}}
  \caption{Two illustrations of a principal subbundle induced by point clouds (green points) and sub-Riemannian geodesics (red and pink curves). On Figure \textit{(a)} the subbundle rank is $k=1$, on \textit{(b)} it is $k=2$. Blue arrows span the principal subbundle subspace at the basepoint of the arrows; on Figure \textit{a)} each subspace is a line (spanned by one arrow), on Figure \textit{(b)} they are planes (spanned by two arrows). The geodesics are initialized at the red, respectively pink, dots, which are inside, respectively outside, the point clouds. The observations colored in a 
 darker green are those with an assigned normalized weight $w_i$, w.r.t. the position of the red dot, larger than $10^{-5}$ - thus the lighter green observations has only a negligible effect on the local PCA computation. On Figure \textit{(b)}, the duration of integration, and thus the curve length, is $T=2\pi$ (red geodesic) and $T=2.3 \pi$ (pink geodesic), respectively - these are the circumferences of great circles centered at the origin and passing through the initial points.} \label{fig_ps_illustrations}
\end{figure}

\section{Sub-Riemannian geometry on $\R^d$}\label{sect_sr_geom_eucl}
We now introduce basic notions of sub-Riemannian geometry on $\R^d$. We focus on the special case that we need, where the sub-Riemannian metric is a restriction of the standard Euclidean inner product. This viewpoint is not presented in sources that we know of, so we devote some space to it. For more comprehensive introductions see e.g. \cite{agrachev2019comprehensive} or \cite{jean2014control}. We strive to make the presentation accessible to someone with only a slight knowledge of differential geometry.

\subsection{Horizontal curves and the sub-Riemannian distance}

In the special case that we consider, a sub-Riemannian structure on $\R^d$ is fully determined by a rank $k$ subbundle $\mathcal{D} \subset T\R^d$. The subbundle can be represented as a smoothly varying orthogonal projection matrix, 
\begin{align}\label{eq_cometric_eucl}
    g^{\star} : \R^d \to \R^{d\times d} : p \mapsto F(p)F^T(p),
\end{align}
where $F : \R^d \to \R^{d\times k}$ is a smooth map s.t. $F(p)$ is a rank \textit{k} matrix whose columns form an orthonormal basis for $\calD_p$ at any $p\in\R^d$. The map $g^{\star}$ is called the \textit{cometric}. If $g^\star(p)$ has full rank $d$ at every $p\in\R^d$, then the map $p\mapsto g^\star(p)^{-1}$ is called a Riemannian metric. We discuss relations between Riemannian and sub-Riemannian geometries below.

A basic intuition behind sub-Riemannian geometry is that, at each point $p\in\R^d$, $\mathcal{D}_p$ contains the allowed velocity vectors of a curve passing through $p$. If a curve $\gamma : [0,1]\to\R^d$ satisfies  
$$\frac{d}{dt}\gamma(t) \defeq \dot{\gamma}(t) \in\mathcal{D}_{\gamma(t)}$$ for almost all $t \in [0,1]$ it is called \textit{horizontal}. This class of curves induces a \textit{distance metric} on $\R^d$, the \emph{Carnot-Carath\'eodory metric},
\begin{equation}\label{eq_carnot_carath_metric}
d^\mathcal{D}(p,q) = \inf \left\{ L(\gamma) \,\, \middle\vert
\begin{array}{c} \text{$\gamma:[0,1] \to \R^d$ is horizontal} \\ \gamma(0) =p, \gamma(1) = q\end{array}\right\} \in \R_{\geq 0} \cup \{\infty\} ,
\end{equation}
for any $p,q \in \R^d$, where $L(\gamma) \defeq \int_0^1 \Vert \dot \gamma(t) \Vert dt$ is the curve length functional. An important property of a sub-Riemannian geometry is whether any two points $p,q$ can be connected by a horizontal curve, or, equivalently, whether $d(p,q)$ is finite for all $p,q\in\R^d$. A sufficient condition for this is that $\mathcal{D}$ is \emph{bracket-generating} (cf. the Chow-Rashevski theorem, \cite{chow2002systeme}, \cite{agrachev2019comprehensive}). This means that, for all $p\in\R^d$, $\Lie \mathcal{D}_p$ equals $\R^d$, where $\Lie \mathcal{D}_p$ consists of the span of all $\mathcal{D}$-valued vector fields and all of their iterated Lie brackets (see e.g. \cite{lee2013smooth}). In this case, $d^\mathcal{D}$ induces the standard topology on $\R^d$.

\subsection{Sub-Riemannian geodesics}\label{sect_SR_geodesics}

We now turn to horizontal curves that are 'locally length-minimizing', i.e. any local perturbation of the curve increases its length. For our purposes, the most important class of such curves is called \textit{normal sub-Riemannian geodesics}.

Normal geodesics are solutions to a system of equations on the cotangent bundle $T^{\star}\R^d$, which, in our setting, can be identified with $\R^d\times\R^d$. A curve $\gamma : [0, T] \to \R^d$ is a normal geodesic if and only if it is the projection to $\R^d$ of a curve in $T^{\star}\R^d$, $\psi : [0,1] \to T^{\star}\R^d$, that satisfies the \textit{sub-Riemannian Hamiltonian equations}.  Let \textit{H} denote the \textit{sub-Riemannian Hamiltonian}, $$H : T^{\star} \R^d \to \mathbb{R}_{\geq 0} : (p,\eta) \mapsto \frac{1}{2} \eta^T g_p^{\star} \eta.$$ We will write $H_p$ if we consider it as a function on $T^{\star}_p \R^d$ only. The Hamiltonian  equations are then given by 
\begin{align}\label{eq_SR_ham}
    \begin{split}
        \dot{p} &= \frac{\partial H}{\partial \eta}(p,\eta) = g_p^{\star} \eta, \\
        \dot{\eta} &= -\frac{\partial H}{\partial p}(p,\eta).
    \end{split}
\end{align}
A solution $\psi(t) \defeq (p(t), \eta(t))$ with initial value $(p_0, \eta_0)$ is called a \textit{normal extremal}. The associated normal geodesic is the curve $\gamma_{p_0}^{\eta_0}(t) \defeq \pi(p(t), \eta(t)) \defeq p(t)$, i.e. the projection of $\psi$ to the first component $\R^d$. Notice that the horizontality of $\gamma$ is apparent from the fact that $g^{\star}_p$ projects $\eta$ to $\calD_p$ in (\ref{eq_SR_ham}). In the Riemannian case the Hamiltonian equations are equivalent to a system of ODE's on the tangent bundle called the \textit{geodesic equations}. This parameterizes geodesics by their initial tangent vector instead of, as in the sub-Riemannian case, the initial cotangent vector. We end this section with a few facts about solutions to Hamilton's equations that we will need later on. Firstly, the Hamiltonian is conserved along solutions,
i.e. $H(p_t,\eta_t) = H(p_0,\eta_0)$ for all $t \in [0,T]$ (see e.g. \cite{agrachev2019comprehensive}, Section 4.2.1). This implies 
that a normal geodesic $\gamma$ is a constant speed curve, since 
\begin{align}\label{eq_constant_speed_ham}
    \Vert \dot{\gamma}(t) \Vert = \Vert g^\star_{p_t} 
\eta_t\Vert = \sqrt{2H(p_t,\eta_t)}.
\end{align} This further implies that 
$\gamma^{\eta_0}_{p_0}$ has unit speed if $\eta_0\in H_{p_0}^{-1}(1/2)$, and therefore that its length is given by the duration of integration $T$.
Lastly, we will need the fact that the Hamiltonian equations are time-homogenous in the sense that, for any $\eta_0 \in H^{-1}(1/2)$ and $\alpha > 0$, $\gamma^{\alpha\eta_0}_{p_0}(t) = \gamma^{\eta_0}_{p_0}(\alpha t)$ (\cite{agrachev2019comprehensive}, Section 8.6). 

\subsection{The sub-Riemannian $\exp$ and $\log$}\label{section_SR_exp}

The sub-Riemannian \textit{exponential} map at $p \in \mathbb{R}^d$ maps a cotangent $\eta \in T_p^{\star}\R^d \cong \R^d$ to the position at time 1 of the normal geodesic initialized by $(p, \eta)$, i.e. $$\exp^{\calD}_p \hspace{2mm} : \hspace{2mm} T^{\star}_p \R^d \to \R^d \hspace{2mm} : \hspace{2mm} \eta \mapsto \exp^{\calD}_p(\eta) \defeq \gamma_p^{\eta}(1).$$

\noindent The exponential map will also be denoted simply by $\exp$. The time-homogeneity of the Hamiltonian equations mentioned in the previous section has two important consequences. Firstly, for $\alpha>0$, it holds that $\exp_p(\alpha \eta) = \gamma_p^{\eta}(\alpha)$, so scaling $\eta$ amounts to moving along a single normal geodesic; secondly, $\gamma$ can be assumed to be unit speed parameterized, and therefore the length of the normal geodesic $\alpha \mapsto \exp_p(\alpha\eta)$, $\alpha \in [0,1]$, is given by $\sqrt{2H(\eta)}$. In the case where this normal geodesic is a global, not just local, length minimizer between its endpoints $p$ and $y \defeq \exp_p(\eta)$, we get the formula \begin{align}\label{eq_ham_vs_distance}
    d^\calD(p,y) = \sqrt{2H(p, \eta)}.
\end{align} 

\subsubsection{Optimizing for the $\log$}
To compute the sub-Riemannian distance between two points, eq. (\ref{eq_ham_vs_distance}) suggests that one should invert the exponential map. If the exponential map at \textit{p} is a diffeomorphism (thus invertible) around $0 \in T^{\star}_p\R^d$, its inverse is called the \textit{logarithmic} map, defined by $$\log^{\calD}_p : \R^d \supset U \to O \subset T^{\star}_p \R^d \quad \text{satisfying} \quad \gamma^{\log^{\calD}_p(y)}_p(1) = y$$ for some open sets $U$ and $O$ with $p\in U$. However, such an open set \textit{U} on which $\exp_p$ is a diffeomorphism only exists if $rank(\mathcal{D}) = d$ (see \cite{agrachev2019comprehensive} Prop. 8.40), in which case the geometry is Riemannian. A simple way to see this is that $\exp_p(H_p^{-1}(0)) = 0$, where $H_p^{-1}(0) = \calD_p^{\perp}$. In the sub-Riemannian case of $rank(\mathcal{D}) < d$ we propose an approximate log map given as a solution to the following optimization problem, for $p,y\in\R^d$,
\begin{align}\label{eq_log}    \overline{\log}_p(y) \in \underset{\eta \in \mathbb{A}}{\argmin} \hspace{1mm} \Vert \exp_p(\eta) - y \Vert^2 + H(p, \eta),
\end{align}
where $\mathbb{A} = T_p^{\star}\R^d$. This problem searches for the shortest normal geodesic between $p$ and $y$. For reasons that will be explained in Section \ref{sect_Rk_repr}, we will also be interested in the case of $\mathbb{A} = \calD_p^{\star} \subset T_p^{\star}\R^d$, the metric dual of $\calD_p$ (Equation \ref{eq_metric_dual} below). Under certain assumptions on $\calD$, notably bracket-generatingness, the image set $\exp_p(T^\star \R^d)$ is dense in $\R^d$ even when $\rank{\mathcal{D}} < d$ \cite{rifford2014sub}, implying that the error in (\ref{eq_log}) can be made arbitrarily small. The problem of finding shortest horizontal curves between points is studied in non-holonomic control theory (see e.g. \cite{jean2014control}). In our current implementations, however, we find (local) solutions via a minimization algorithm based on BFGS \cite{numerical_optimization_nocedal} and automatic differentation of the exponential map, which is possible using e.g. the python library Jax \cite{jax}.

\subsection{The subbundle induces a foliation}\label{sect_integrability}

If a bracket generating subbundle $\calD$ (i.e. $\Lie \calD = T\R^d$) represents one extreme for subbundles on $\R^d$ then its opposite is that of a constant rank integrable subbundle; that is, a constant rank subbundle $\tilde{\calD}$ satisfying $\Lie\tilde{\calD}= \tilde{\calD}$. An important property of integrable subbundles is that they posses integral manifolds which are immersed submanifolds $\M \subset \R^d$ such that $T_p \M = \tilde{\calD}_p$ for all points $p\in \M$. Given a constant rank integrable subbundle $\tilde{\calD}$, the global Frobenius Theorem tells us that $\R^d$ is foliated, or partitioned, by the collection of all maximal integral manifolds of $\tilde{\calD}$ - each integral manifold is called a leaf and has dimension equal to the rank of $\tilde{\calD}$ (see Lee, Chapter 19 for full details on integrable subbundles, there called \textit{involutive distributions}, and the Frobenius Theorem). The geometry induced by $\tilde{\calD}$ on $\M \subset \R^d$ is Riemannian since $\tilde{\calD}_p$ is the full tangent space at each point $p \in \M$, implying that all curves on $\M$ are horizontal; therefore the sub-Riemannian geodesic equations are identical to the Riemannian geodesic equations. If a subbundle 
$\breve{\calD}$ is neither bracket generating ($\Lie\breve{\calD}=T\R^d$) nor integrable ($\Lie \breve{\calD} = \breve{\calD}$) then the subbundle $\Lie\breve{\calD}\subset T\R^d$ is integrable and foliates $\R^d$ by its integral manifolds $\M$, each of dimension $\rank(\Lie \breve{\calD})$. The induced geometry on each integral manifold $\M$ is sub-Riemannian (not all curves are horizontal).

In relation to problem \textit{A}, mentioned in the introduction, the previous discussion implies that the induced distance metric is finite, $d^\calD(p,q) < \infty$, for all points $p,q$ in the same leaf, whereas it is infinite for points belonging to different leaves - a horizontal curve is constrained to move within a single leaf.
In relation to problem \textit{B}, we are interested in generating a
$k$-dimensional submanifold of $\R^d$ from a rank $k$ subbundle $\calD$ whose integrability
or bracket generation is a priori unknown. In Proposition 3.1 below we show how
this can be done via sub-Riemannian geometry. The generated submanifold is tangent to $\calD$ in 'radial' directions, but not in all directions, as will be explained
below.

\subsection{The exponential image of the dual subbundle}\label{sect_exp_image_dual_subbundle}

The content of the previous sections implies the following. If $\calD$ is integrable, then there exists an open set $U\subset \calD_p$ s.t. $M \defeq \exp_p(U)$ is a \textit{k}-dimensional embedded submanifold of $\R^d$ whose tangent space a every point $q\in M$ equals $\calD_q$. In this case, $\exp_p$ is a diffeomorphism from $U$ to this submanifold. On the other hand, if $\calD$ is not integrable, there exists no submanifold that is tangent to $\calD$, in particular $\exp_p(U)$ does not satisfy this. However, in the following we show that $\exp_p(U)$ is still a \textit{k}-dimensional embedded submanifold.

Let 
\begin{equation}\label{eq_metric_dual}
    \calD^\star_p \defeq \{\langle v, \cdot \rangle \vert v\in\calD_p\} \subset T^{\star}_p\R^d
\end{equation} 
be the dual space of $\calD_p$ w.r.t. the standard inner product $\langle v, u\rangle \defeq v^Tu$. This simply means that $\calD^\star_p$ consists of the tangent vectors (column vectors) in $\calD_p$ considered as covectors (row vectors). Thus, $\calD^\star_p$ is a \textit{k} dimensional subspace of $T^\star_p\R^d$ which can be identified with $\calD_p\subset T\R^d$.

\begin{proposition}[$\exp_\mu$ is a local diffeomorphism from $\calD^{\star}_\mu$]\label{prop_dual_subbundle_exp}\label{prop_sr_exp_diffeo}Let $\mu \in \R^d$ be arbitrary. 
There exists an open subset $C_\mu \subset \mathcal{D}_\mu^{\star}$ containing $0$ such that $\exp_\mu^{\mathcal{D}}$ restricted to $ C_\mu$ is a diffeomorphism onto its image. That is, $$M_\mu^{\mathcal{D}}
\defeq \exp_\mu^{\mathcal{D}}(C_\mu) \subset \R^d$$ is a smooth
$k$-dimensional embedded submanifold of $\mathbb{R}^d$ containing $\mu$.
\end{proposition}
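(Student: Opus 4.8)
The plan is to reduce the statement to the inverse/implicit function theorem by computing the differential of $\exp_\mu$ at the origin and checking that its restriction to the $k$-dimensional subspace $\calD_\mu^{\star}\subset T_\mu^{\star}\R^d$ is injective. An injective differential makes the restricted map an immersion at $0$, and an immersion is locally an embedding — this simultaneously yields the diffeomorphism onto the image and the fact that the image is a $k$-dimensional embedded submanifold.

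First I would record that $\exp_\mu$ is well defined and smooth on a neighborhood of $0\in T_\mu^{\star}\R^d\cong\R^d$. Since $F$ in (\ref{eq_cometric_eucl}) is smooth, the cometric $g^{\star}$, the Hamiltonian $H$, and hence the Hamiltonian vector field in (\ref{eq_SR_ham}) are smooth; the normal extremal with initial covector $0$ is the constant curve $t\mapsto\mu$, which exists on all of $[0,1]$, so by smooth dependence of ODE solutions on initial data, $\gamma_\mu^{\eta}$ exists on $[0,1]$ for all $\eta$ in some neighborhood of $0$ and $\eta\mapsto\exp_\mu(\eta)=\gamma_\mu^{\eta}(1)$ is smooth there.

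The main computation is that $d(\exp_\mu)_0 = g_\mu^{\star}$. I would obtain this from the time-homogeneity recalled in Section \ref{sect_SR_geodesics} together with the first equation of (\ref{eq_SR_ham}): for $\eta$ with $H(\mu,\eta)>0$, writing $\eta = s\eta_0$ with $s=\sqrt{2H(\mu,\eta)}$ and $\eta_0\in H^{-1}(1/2)$, one has $\exp_\mu(\alpha\eta)=\gamma_\mu^{\eta_0}(\alpha s)$, hence $\tfrac{d}{d\alpha}\big|_{\alpha=0}\exp_\mu(\alpha\eta)=s\,\dot\gamma_\mu^{\eta_0}(0)=s\,g_\mu^{\star}\eta_0=g_\mu^{\star}\eta$. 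Since $\exp_\mu$ is differentiable at $0$ and the set $\{\eta : H(\mu,\eta)>0\}=\R^d\setminus\calD_\mu^{\perp}$ spans $\R^d$, the linear map $d(\exp_\mu)_0$ must equal $\eta\mapsto g_\mu^{\star}\eta=F(\mu)F(\mu)^T\eta$, i.e. orthogonal projection onto $\calD_\mu$. Now restrict to $\calD_\mu^{\star}$: under the identification (\ref{eq_metric_dual}) this subspace is exactly $\calD_\mu$, on which the projection $g_\mu^{\star}$ is the identity, so $d(\exp_\mu)_0|_{\calD_\mu^{\star}}$ is the inclusion $\calD_\mu\hookrightarrow\R^d$ — injective, with $k$-dimensional image.

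Consequently $\exp_\mu|_{\calD_\mu^{\star}}$ is a smooth immersion at $0$, so by the local immersion theorem (\cite{lee2013smooth}) there is an open $C_\mu\ni 0$ in $\calD_\mu^{\star}$, contained in the domain of $\exp_\mu$, on which $\exp_\mu$ restricts to a smooth embedding; after possibly shrinking $C_\mu$ it is in addition a homeomorphism onto its image with the subspace topology, so $M_\mu^{\calD}\defeq\exp_\mu(C_\mu)$ is a $k$-dimensional embedded submanifold of $\R^d$, the map $\exp_\mu\colon C_\mu\to M_\mu^{\calD}$ is a diffeomorphism, and $\mu=\exp_\mu(0)\in M_\mu^{\calD}$. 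The only genuinely delicate point is the identification-heavy computation of $d(\exp_\mu)_0$, in particular verifying that $H$ is positive on $\calD_\mu^{\star}\setminus\{0\}$ so that the time-homogeneity scaling is legitimate in those directions and the restricted differential is literally the identity of $\calD_\mu$; everything after that is the standard fact that an injective immersion is locally an embedding.
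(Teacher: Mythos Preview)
Your proof is correct and follows essentially the same route as the paper: compute $d(\exp_\mu)_0\vert_{\calD_\mu^{\star}}=g_\mu^{\star}\vert_{\calD_\mu^{\star}}$ via the time-homogeneity $\exp_\mu(s\eta)=\gamma_\mu^{\eta}(s)$ and the first Hamilton equation, observe that this is injective (indeed the identity under the identification $\calD_\mu^{\star}\cong\calD_\mu$), and conclude by the standard fact that an immersion is locally an embedding. The paper's version is slightly terser because it invokes the rescaling identity for arbitrary $\eta$ (citing \cite{agrachev2019comprehensive}, Corollary~8.36) rather than first normalizing to $H_\mu^{-1}(1/2)$, but the argument is the same.
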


It holds that $T_p(M^\calD_\mu) = \calD_p$ at $p=\mu$, but at a general $p \in M^\calD_\mu$ these spaces are different if $\calD$ is not integrable. They need not even be 'close', as can be seen in e.g. the Heisenberg group where $\exp_0^{\mathcal{D}}(C_0)$ is the xy-plane, to which the Heisenberg subbundle is almost orthogonal at certain points $p$. But $M^\calD_\mu$ is 'radially horizontal', in the sense that it is the union of normal geodesics from $\mu$ each of which is horizontal w.r.t. $\calD$. In particular, if we assume that $C_\mu$ is convex and let $\partial{C_\mu} \subset \calD^\star_p$ denote its boundary, then 
\begin{equation}\label{eq_princ_subm_union_geos}
    \exp_\mu^{\mathcal{D}}(C_\mu)  =  \left\{ \gamma_p^\eta(t) \hspace{1mm}\middle\vert \hspace{1mm} \eta \in \partial{C_\mu}, \hspace{1mm} t\in [0,1])\right\},
\end{equation}
\noindent where each geodesic $t \mapsto \gamma_p^\eta(t)$ is tangent to $\calD$.

Note that, since the exponential map restricted to $C_p$ is a diffeomorphism, the log-optimization problem (\ref{eq_log}) with $\mathbb{A} = \calD_p^{\star}$ has a unique solution for $p = \mu$ and any $y \in M^\calD_\mu$.

\section{Sub-Riemannian geometry of the principal subbundle}\label{sect_sr_geometry_of_principal_subbundle}

In this section, we present a sub-Riemannian (SR) structure on $\R^d$ based on local PCA's, namely, the SR structure determined by the principal subbundle. 
Moving horizontally with respect to the principal subbundle means to move within a \textit{k}-dimensional subspace of maximum local variation at each 
step. Therefore, geodesics that are horizontal w.r.t. this structure follow the point cloud, and the associated \textit{exp} and \textit{log} maps can be used for representing the 
data. The image of the dual subbundle under the exponential map, described in Proposition 
\ref{prop_dual_subbundle_exp} above, will be called a \textit{principal submanifold} when the principal subbundle is used. Such a submanifold approximates the data for well-chosen hyperparameters. This is described in Section \ref{sect_PS} where we also give an algorithm to compute it. Furthermore, we discuss the use of the log optimization problem (\ref{eq_log}) for giving a representation of the observations in $\R^k$ (Section \ref{sect_Rk_repr}) and for computing distances between observations (Section \ref{sect_sr_dist}).
 
\subsection{Properties of the sub-Riemannian structure}

The sub-Riemannian structure that we will use to model the data is the one determined by the principal subbundle $\calE^{k, \alpha}$, also denoted simply by $\calE$. 
Proposition \ref{prop_smooth_princ_subb} about smoothness of the subbundle implies smoothness of the cometric $g^{\star}$. For any $p\in\R^d \setminus \mathcal{S}_{\alpha,k}$ the cometric can be represented as $g_p^{\star} = F(p)F(p)^T \in \R^{d\times d}$, where $F = [e_1(p), \dots, e_k(p)]$ is a matrix whose columns are the first \textit{k} eigenvectors of the weighted second moment $\Sigma_{\alpha}(p)$ (Definition \ref{def_weighted_2nd_moment}).

We know that $\mathcal{E}$ is of constant rank \textit{k}, but we do not know  if $\Lie \mathcal{E}$ is of constant rank, let alone if it is bracket-generating (i.e. $\rank(\Lie \mathcal{E}) = d$). 
Under the manifold hypothesis, in the limit of zero noise and the number of observations going to infinity, the convergence result of \cite{singerVectorDiffusionMaps} (Theorem B.1) suggests that the subbundle is everywhere tangent to a submanifold and thus integrable. 

\subsection{Computing geodesics}\label{sect_computing_geos_eucl}

We compute geodesics w.r.t. the chosen sub-Riemannian structure by numerically integrating the sub-Riemannian Hamiltonian equations (\ref{eq_SR_ham}), see Appendix \ref{app_implementation} for notes on the implementation. In \cite{sun1985eigenvalues}, Theorem 2.4, formulas are given for the derivatives of eigenvector fields. This enables computation of derivatives of the Hamiltonian, 
\begin{align*}
H(p,\eta) &= \frac{1}{2} \eta^T g_p^{\star} \eta \\
&= \frac{1}{2} \eta^T F(p)F(p)^T \eta \\
&= \frac{1}{2} \eta^T  [e_1(p), \dots, e_k(p)]  [e_1(p), \dots, e_k(p)]^T \eta,    
\end{align*}
via automatic differentiation libraries such as Jax \cite{jax}. The formulas in \cite{sun1985eigenvalues} hold under the assumption that the first $k+1$ eigenvalues, $\lambda_1(p), \dots, \lambda_{k+1}(p)$, are distinct (cf. Lemma \ref{lem_eigenvectorfields}). Note that our basic assumption on the observations is that they are well approximated locally by a $k$-dimensional linear space, implying that the first $k$ eigenvalues are relatively close, possibly equal. Two comments on this: \textit{1.} Using the results in \cite{sun1990multiple} (see also Proposition \ref{prop_smooth_princ_subb} and its proof), it is possible to compute derivatives of the Hamiltonian under the milder assumption of only $\lambda_k(p)$ and $\lambda_{k+1}(p)$ being distinct - however, in practice we have not had the need to pursue this. \textit{2.} Since the differences between $\lambda_1,\dots,\lambda_k$ are likely to be relatively small, the ordering and rotation of the eigenvectors is effectively random. However, this does not affect the Hamiltonian equations, since the Hamiltonian depends only on the cometric, a projection matrix, which is invariant to rotations and permutations of the basis $F(p)$ within $\calE_p$.
 
Figure \ref{fig_ps_illustrations} illustrates sub-Riemannian geodesics with respect to the metric induced by two different point clouds. The surfaces (principal submanifolds) presented in figures \ref{fig_s_surface} and \ref{fig_surface_recon} are likewise composed of many such geodesics, cf. the next section. 

\subsection{Principal submanifolds (Problem \textit{B})}\label{sect_PS}

As the first use of principal subbundles, we define the \textit{principal submanifold} from a base point $\mu \in \R^d \setminus \mathcal{S}_{\alpha, k}$, given a set of observations in $\R^d$. This choice of data representation implicitly assumes that the data is locally well-described by a submanifold, i.e. the 'manifold hypothesis'.

\begin{definition}[Principal submanifold at $\mu$] Let  $\{x_1, \dots, x_N\} \subset \R^d$ be a set of observations. Let $\mu\in \R^d \setminus \mathcal{S}_{\alpha, k}$ be a chosen base point, let $\alpha$ be the kernel range and let $k \in \{1, \dots, d-1\}$ be the rank of the principal subbundle, $\mathcal{E} = \mathcal{E}^{k, \alpha} \subset T\R^d$. Let $\mathcal{E}^\star_\mu$ be the dual subbundle at $\mu$, and $B_r \subset \mathcal{E}^\star_\mu$ a \emph{k}-dimensional open ball of radius \emph{r} containing $0$. The \emph{principal submanifold} of radius \emph{r} is given by
\begin{align}
    M^{k}_{\mu}(r) \defeq \exp^{\mathcal{E}}_{\mu}(B_r) \subset \R^d,
\end{align}
\end{definition}
\begin{remark} We will assume that \emph{r} is sufficiently small for $M^{k}_{\mu}(r)$ to actually be a submanifold, cf. Proposition \ref{prop_dual_subbundle_exp}. 
If we write simply $M^{k}_{\mu}$, we will assume that \emph{r} takes the largest such value.
\end{remark}

Algorithm \ref{algo_principal_submanifold} describes how to compute a point set 
representation of a principal submanifold, up to arbitrary resolution.  For hyperparameters, $\mu$, $k$, $\alpha$  
(the base point, dimension and range, respectively),
the principal submanifold, $M^{k}_{\mu}$, is an estimate of the true underlying submanifold, $M$, locally around $\mu$. As described in Section \ref{section_SR_exp}, $M^{k}_{\mu}$  cannot be expected to be exactly tangent to $\calE$ since $\calE$ might not be integrable. However, since $\mathcal{E}^{k, \alpha}$ approximates the tangent spaces of the true submanifold our expectation is that the subbundle is 'close' to being integrable and therefore that the difference between $\mathcal{E}_p$ and $T_{p}(M^k_{\mu})$ is small for $p\in M^k_{\mu}$. The approximation $M^k_{\mu} \approx M$ comes with the following guarantee: if $\mu \in M$ and the principal subbundle contains the true tangent spaces to $M$ around $\mu$, then the principal submanifold  is an open subset of the true submanifold $M$. In particular, the ball $B_r \subset \calE^{\star}_\mu \subset T^{\star}_p M \cong \R^k$ is a (normal) coordinate chart for $M$. Figure \ref{fig_geodesics_sphere_noise} illustrates the effect of noise on the geodesics, and therefore on the principal submanifold, for points distributed around the unit sphere. In the noiseless case, Figure \ref{fig_geodesics_sphere_noise} \textit{a)}, the computed geodesic paths are identical to the exact Riemannian geodesics on the sphere, up to numerical error, and the resulting principal submanifold is thus identical to the sphere (the mean norm of each generated point is $0.9992$ with standard deviation $0.0014$). In Figure \ref{fig_geodesics_sphere_noise} \textit{b)} the observations on the sphere have been added isotropic Gaussian noise in $\R^3$ with marginal standard deviation $\sigma = 0.1$. In this case the geodesics still evolve very close to the sphere (the mean norm of each generated point is $1.0299$ with standard deviation $0.0162$), but they start to cross after some integration steps, so that the manifold property of $M^{k}_{\mu}(r)$ seems to hold for a smaller value of the radius $r$ compared to the noiseless case.

\subsubsection{Relation to principal flows}\label{sect_rel_to_principal_flows}

We end this subsection with a discussion on the relation between a principal submanifold for $k = 1$ and the principal flow, described in \cite{panaretos2014principal}. For $k = 1$, integrating the Hamiltonian equations (\ref{eq_SR_ham}) yields the flow of the first eigenvector field $e_1$ starting from $\mu$. This is called the \textit{principal flow} in \cite{panaretos2014principal}, but the methods differ in important ways. Firstly, the principal flow at $p$ is based on a second moment which is  centered around $p$, not at the local mean around $p$. The span of the first eigenvector of such an uncentered second moment will be 'orthogonal' to the point cloud when evaluated at points outside of it. This causes the principal flow to stray away from the observations if it reaches such a point. As opposed to this, the first eigenvector of the centered second moment stays tangential to the point cloud when evaluated outside of it, as illustrated by  the pink curve in Figure \ref{fig_ps_illustrations} \textit{a)}. This behaviour arguably makes it more stable, see simulation results in section \ref{sect_applications_manif_data} and Figure \ref{fig_sphere_curves}. Secondly, to handle the fact that eigenvectors are determined only up to their sign, the principal flow is computed by solving a variational problem and integrating an associated system of ODE's. This system of ODE's has to be integrated for a range of candidate values of a Lagrange multiplier, in the end choosing the value for which the corresponding curve minimizes an energy functional. As opposed to this, we formulate the problem as a Hamiltonian system of ODE's which is invariant to the sign of the vector field (only the corresponding rank 1 subbundle matters), removing the need for the variational formulation and the ODE integration for multiple values of a Lagrange multiplier. It is this reformulation of principal flows as solutions to a set of geodesic (Hamiltonian) equations that also allows us to generalize the concept to higher dimensions.

\begin{figure}
     \centering
     \begin{subfigure}[b]{0.55\textwidth}
         \includegraphics[trim=130 170 130 170,clip,width=\textwidth]{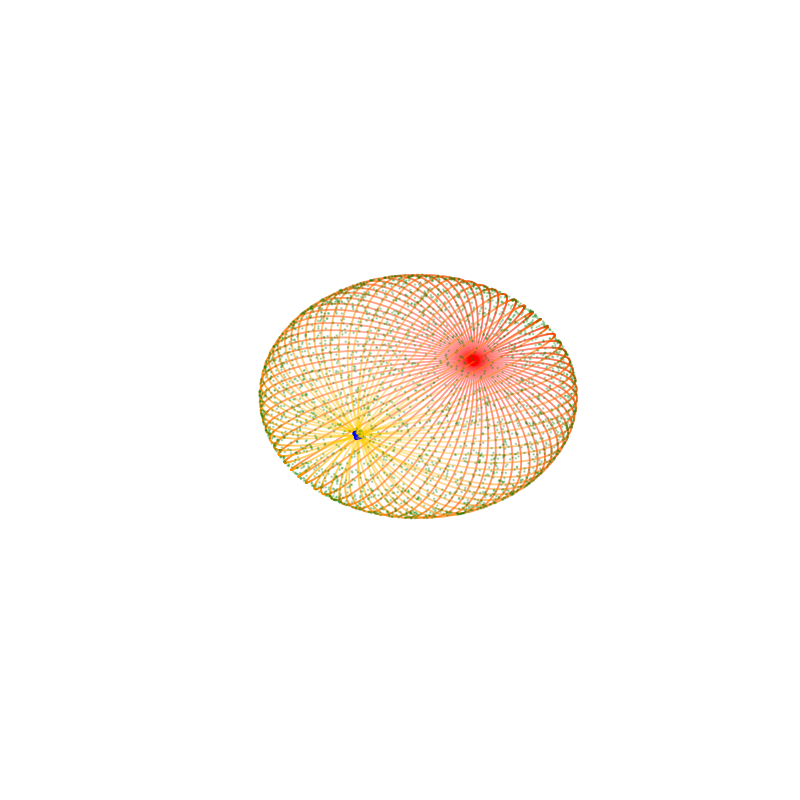}
         \caption{}
         \label{fig:y equals x}
     \end{subfigure}
     \hfill
     \begin{subfigure}[b]{0.49\textwidth}
         \includegraphics[trim=130 170 130 170,clip,width=\textwidth]{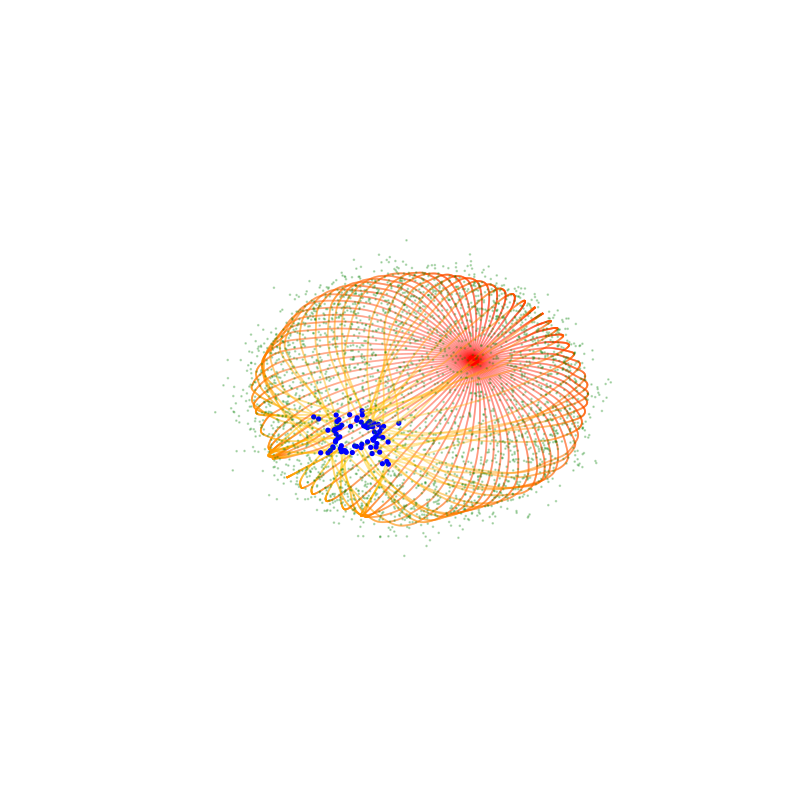}
         \caption{}
         \label{fig:three sin x}
     \end{subfigure}
     \caption{ Two illustrations of geodesics w.r.t. the sub-Riemannian metric induced by points clouds distributed around the unit sphere. On Figure \textit{a)} 2000 points (green points) $x_i$ are sampled uniformly on the sphere. On Figure \textit{b)} noise is added to the observations, which are now sampled from $y_i \cong N(x_i, I_3 \cdot \sigma)$, i.e. isotropic Gaussian distributions with marginal standard deviation $\sigma = 0.1$ (green points). On each figure, 75 geodesics with initial cotangents on a grid in the dual subbundle at the basepoint $\mu = (0,-1,0)$ are generated (these are the curves with a color gradient from red to yellow). The duration, and thus length of each geodesic is $T = \pi$, which theoretically corresponds to half a round on the unit sphere. The endpoint of each geodesic is marked by the blue dots. For points on the geodesics on Figure \textit{a)}, the mean norm is 0.9992 with standard deviation 0.0014 - thus the geodesics stay close to the true submanifold (the sphere). For points on the geodesics on Figure \textit{b)}, the mean norm is 1.0299 with standard deviation 0.0162 - thus the geodesics still stay close to the sphere, but they now deviate somewhat from great arcs, as illustrated by the endpoints not being exactly at the opposite pole.}
     \label{fig_geodesics_sphere_noise}
\end{figure}

\subsubsection{Projection to $M_\mu^k$}\label{sect_projection}

An observation $x_i \in \R^d$ can be projected to $M_{\mu}^k$ by $$\pi_{M_\mu^k}(x_i) = exp^{\calE}_\mu(\overline{\log}_\mu(x_i))$$ where $\overline{\log}_\mu(x_i)$ is a solution to (\ref{eq_log}) with search space $\mathbb{A} = \calE_\mu$. Alternatively, given a discrete representation $\widetilde{M_\mu^k}$ of $M_\mu^k$, computed using Algorithm \ref{algo_principal_submanifold}, one can use the discrete projection $\pi_{\widetilde{M_\mu^k}}(x_i) \defeq \argmin_{p \in \widetilde{M_\mu^k}} \Vert x_i - p \Vert$, which can be solved numerically as a Euclidean 1-nearest neighbours problem.

\begin{algorithm}[H] 
\caption{Point representation of a principal submanifold}
\label{algo_principal_submanifold}
\begin{algorithmic}[1]
\Require{
\Statex
\begin{itemize}
      \item \textit{Geometric parameters:} kernel range $\alpha \in (0, \infty)$, submanifold dimension $k \in \{1, \dots, d-1\}$, base point $\mu \in \R^d \setminus \mathcal{S}_{\alpha, k}$, radius $r>0$.
      \item \textit{Numerical parameters:} a number of geodesics $L \in \mathbb{N}$, the stepsize $\Delta > 0$.
      \vspace{2mm}
  \end{itemize}
} 
\Ensure{
\Statex
$sL + 1$ points in $M^{\mathcal{E}^{\alpha, k}}_{\mu}(r) \subset \R^d$, where $s = \lfloor{r/\Delta}\rfloor$ is the number of integration steps. 
\vspace{2mm}
}
\State \textit{Initialization}: Generate \textit{L} cotangents $\eta_{i}$ on the  $k$-dimensional unit sphere, $\eta_{i} \in \mathcal{S}^k \subset \left(\mathcal{E}^{k, \alpha}\right)^{\star} \cong \mathbb{R}^k, i = 1\dots L$.
   
   \vspace{2mm}
\Statex 
\For{{$i=1$ {\bfseries to} $L$}}
   \State \parbox[t]{313pt}{Integrate Hamiltonian equations (\ref{eq_SR_ham}) with initial condition $(\mu, \eta_i)$ over $s = \lfloor{r/\Delta}\rfloor$ steps of 
   stepsize $\Delta$.}
   \State Store the points along the trajectory; $p_{ij} = \exp^{\mathcal{E}}_{\mu}(j \Delta  \eta_i), j = 1\dots s$.
\EndFor
\vspace{2mm}
    \State \Return {Points $\left\{p_{ij} = \exp^{\mathcal{E}}_{\mu}(j \Delta  \eta_i) \middle\vert i = 1\dots N, j = 1\dots s\right\}$}
\end{algorithmic}
\end{algorithm}

\subsection{Representation of observations in $\R^k$ (Problem \textit{C})}\label{sect_Rk_repr}
The ball $B_r \subset \calE^\star_\mu \cong \R^k$ forms a coordinate chart for the principal submanifold, i.e. any point $p\in M^k_\mu(r)$ can be represented as $\bar{p} \defeq \exp^{-1}_{\mu}(p) \in \R^k$. It behaves like a socalled normal chart, in the sense that the SR distance between the base point $\mu$ and $p\in M^k_\mu$ is preserved, $d^{\mathcal{E}}(\mu, p) = \Vert \bar{p}\Vert$, while the distances between arbitrary points $p, q \in M_\mu^k$ are distorted in a way that depends on the curvature of $M^{\mathcal{\calE}}_{\mu}$. If $\{x_1, \dots, x_N\}$ are observations distributed around $M^k_\mu$, then the projections $\pi_{M_\mu^k}(x_i) \in M^k_\mu, i=1\dots N$, can be represented in this chart by solving the log problem (\ref{eq_log}) with $\mathbb{A} = \calE_\mu$,   yielding lower dimensional representations $\overline{\pi_{M_\mu^k}(x_i)} \defeq \overline{log}_\mu(\pi_{M_\mu^k}(x_i))\in\R^k, i =1..N$. Computing this is less complex than it looks; in fact, solving the projection problem (either the continuous or the discrete version, c.f. Section \ref{sect_projection}) already involves solving the log-problem, so computing a projection also yields the representation in $\R^k$.  See Figure \ref{fig_s_surface} and Section \ref{sect_s_surface} describing a 2D representation of the S-surface embedded in $\R^{100}$.

\subsection{Computing the SR distance between points (Problem \textit{A})}\label{sect_sr_dist}
As discussed in Section \ref{section_SR_exp}, we can combine Equations (\ref{eq_ham_vs_distance}) and (\ref{eq_log}) to approximate the SR distance between two points $x,y \in \R^d \setminus \mathcal{S}_{\alpha, k}$ by $$d^{\mathcal{E}}(x, y) \approx \sqrt{2 H(\overline{\log}_{x}(y))},$$ with log search space $\mathbb{A}=T_{\mu}^\star\R^d$. As mentioned, we cannot expect to find the exact SR distance, i.e. the length of the globally shortest curve joining $x$ and $y$, even in the case of a bracket-generating subbundle for which $d^\calE$ is in fact finite for all $x,y$. When the points are observations, i.e. $x,y \in \{x_i\}_{i =1, \dots, N}$, this might not be desirable either since the error in the log minimization problem (\ref{eq_log}) can be
\begin{figure}
  \begin{center}
    \includegraphics[width=0.5\textwidth]{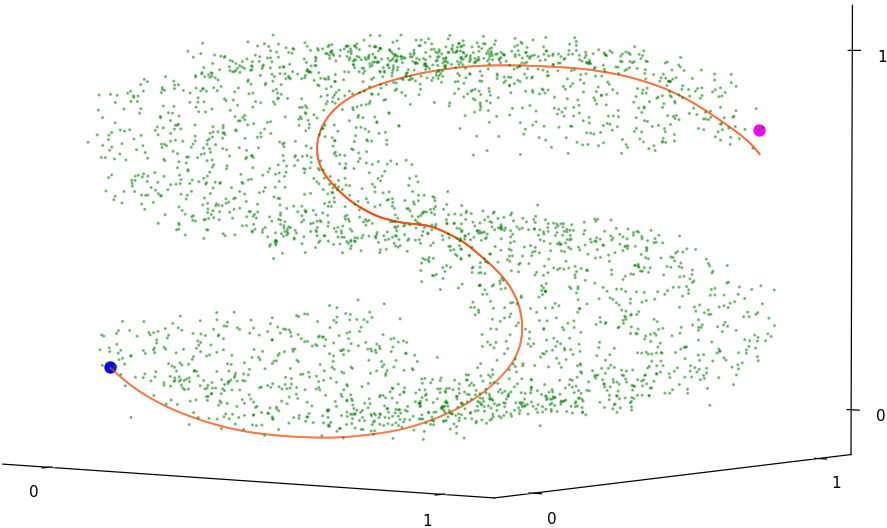}
  \end{center}
  \caption{Illustration of a computation of $\overline{\log}_x(y)$ based on observations (green dots) distributed around the S-surface. the base point, \textit{x}, is the blue dot and the target point, \textit{y}, is the pink dot. The red curve is the geodesic $t \mapsto \exp^{\calE}_x(t\cdot \overline{\log}_x(y)), t\in [0,1]$.}
  \label{fig_log_s_surface}
\end{figure}

interpreted as an effect of random noise. 
See Section \ref{sect_learning_distance_metric} for a numerical evaluation of estimated distances $d^{\calE}$ based on a dataset in $\R^{50}$. 
Figure \ref{fig_log_s_surface} illustrates a computation of $\overline{\log}_x(y)$ based on a dataset distributed around the S-surface. The base point, \textit{x}, is the blue dot and the target point, \textit{y}, is the pink dot. The red curve is the geodesic $t \mapsto \exp^{\calE}_x(t\cdot \overline{\log}_x(y)), t\in [0,1]$, the length of which constitutes our estimate of the distance between $x$ and $y$. As expected, the endpoint $\exp^{\calE}_p(\overline{\log}_x(y))$ doesn't match $y$ exactly. On Figure \ref{fig_surface_recon}, the color gradient and concentric circles on the face illustrate the SR distance to the base point on the nose.

\subsection{Hyperparameters}

The kernel range $\alpha$ and the dimension \textit{k} are hyperparameters that are common to many methods and there is a significant body of literature about how to select them. See Appendix \ref{app_hyperparam} for our comments and references. Regarding the base point $\mu \in \R^d$ of a principal submanifold, we suggest to use a local mean around a well-chosen observation $x_{0}$. Which particular $x_{0}$ will be application specific, but a general purpose option is a within-sample \textit{Fréchet mean}, $$\hat{\mu} \in \underset{\mu \in \{x_i\}_{i=1..N}}{\argmin} \frac{1}{N}\sum_{i=1}^N d(\mu, x_i),$$ where $d$ is either the Euclidean distance or $d^{\calE}$ of the principal subbundle.

\begin{figure}[h!]
  \centering
  \includegraphics[scale=0.5]{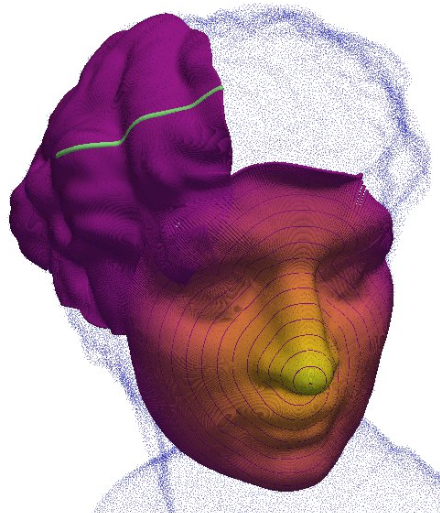}
  \caption{Illustration of the experiment described in Section \ref{sect_surface_recon}. Two principal submanifolds recontructing the 'head sculpture' surface from a noisy point cloud (blue points). One submanifold is centered approximately at the tip of the nose, the other is centered at the left end of the green line. The figure shows the raw points generated by Algorithm \ref{algo_principal_submanifold} - no subsequent processing, apart from coloring, has been applied. The skewed circles on the face are geodesic balls, i.e. points on the same circle have the same sub-Riemannian distance to the base point. Likewise, the colors of the face depends on the SR distance to the base point at the tip, a lighter color signifies shorter distance. The green line on the top submanifold highlights a single SR geodesic - each of the two submanifolds consists of $L=2500$ such geodesics.}
  \label{fig_surface_recon}
\end{figure}


\section{Generalization to observations on a Riemannian manifold}\label{sect_manif_valued_case}

In this section, we generalize the framework of principal subbundles to the setting where the observations are points on an a priori known Riemannian manifold. A numerical application of the method to such data is presented in Section \ref{sect_applications_manif_data}. These two sections assume a deeper knowledge of differential geometry than elsewhere, but they can be skipped without loss of continuity by the reader who wish to focus on the case of Euclidean valued data. It turns out that the formulation of principal subbundles for Euclidean valued data, given above, is based only on operations that generalize naturally to the setting of manifold valued data, as we show below. 

\subsection{Context: geometric statistics}

We now assume that $\{x_i\}_{i=1\dots N}$ are points on an a priori known smooth manifold $\mathcal{N}$ of dimension $d < \infty$, 
equipped with a known Riemannian metric $h$. This is a generalization of the theory presented above, where $\mathcal{N}$ was $\R^d$ and $h$ was the Euclidean metric. Our aim is now to find a lower dimensional geometric structure (e.g. a submanifold) \textit{within} this given manifold $\mathcal{N}$. 

The field of statistics and machine learning for manifold-valued data is called \textit{geometric statistics} \cite{pennec2019riemannian}. An intuitive example of such data is observations on a surface in $\R^3$, such as the sphere. More abstract examples are \textit{shapes} represented as sets of landmarks, e.g. in Kendall's shape space (see \cite{kendall1984shape} and \cite{huckemann2010intrinsic} for an application) or an LDDMM landmark manifold \cite{younes2010shapes}. Other examples are provided in the field of directional statistics \cite{pewsey2021recent} and image processing via the manifold of SPD matrices (e.g. \cite{pennec2019riemannian}, Chapter 3).

Within the field of geometric statistics, several methods have been proposed to find a lower dimensional submanifold $\widehat{M}\subset \mathcal{N}$ approximating the observations in $\mathcal{N}$. Important methods are Principal Geodesic Analysis (PGA) \cite{fletcher2004principal}, Principal Nested Spheres \cite{jung2010generalized}) and Barycentric Subspace Analysis \cite{pennec2018barycentric}. A basic method is \textit{tangent PCA}, which consists of mapping the observations to a tangent space at a chosen base point $\mu \in \mathcal{N}$ via the Riemannian logarithm and performing Euclidean PCA in this linear representation. This method is not sensitive to the curvature of neither $\mathcal{N}$ nor of the dataset. Tangent PCA can be seen as a linear approximation of PGA (as discussed in \cite{fletcher2004principal}), a method which is more sensitive to the curvature of $\mathcal{N}$ but still not sensitive to the curvature of the dataset, as we now describe. Let $\mu \in \mathcal{N}$ be a well-chosen base point, e.g. the Fréchet mean. The collection of geodesics initialized by tangent vectors in a \textit{k}-dimensional subspace $\Delta_k \subset T_{\mu}\mathcal{N}$ form a \textit{k}-dimensional submanifold of $\mathcal{N}$, given as $\exp^h_\mu(\Delta_k)$, where $\exp^h$ is the Riemannian exponential map of $(\mathcal{N},h)$. Starting from $k=1$ and adding subsequent dimensions one at a time, the optimal subspace $\Delta_k$ at each step is defined to be the minimizer of the geodesic distance from $\exp^h_\mu(\Delta_k) \subset \mathcal{N}$ to the observations. This optimization is computationally intensive, to the point of being infeasible for even fairly simple manifolds and datasets - no publicly available implementation of PGA exists to this date (for work in this direction, see e.g. \cite{sommer2010manifold}). Furthermore, geodesics of the ambient space $(\mathcal{N},h)$, which forms the approximating submanifold $\exp^h_\mu(\Delta_k)$, is a relatively inflexible family of curves - they are the generalization of straight lines to a manifold. 

A principal submanifold constructed from a principal subbundle on $\mathcal{N}$ can be seen as a locally data-adaptive combination of tangent PCA and PGA. We compute local tangent PCA's to construct the principal subbundle $\calE_\alpha$ of the tangent bundle $T\mathcal{N}$. This determines a  data-dependent sub-Riemannian metric and thus sub-Riemannian geodesics on $\mathcal{N}$, with which we can approximate the data. That is, compared to PGA, the geodesics forming the principal submanifold are not those of the ambient Riemannian manifold $(\mathcal{N}, g)$, but those of an estimated sub-Riemannian structure on $\mathcal{N}$. Our approximating submanifold is $\exp^\calE_\mu(\Delta_k)\subset \mathcal{N}$, similar to PGA, except that the exponential is now the sub-Riemannian exponential determined by the principal subbundle and $\Delta_k$ is the metric dual of $\calE_\mu$, the principal subbundle at $\mu$. Note that by doing local PCA's (i.e. solving many simple, local least-squared-error problems) we remove the need for the expensive 'global' optimization for the subspace $\Delta_k$.

\subsection{Sub-Riemannian structures on a general smooth manifold}

This section introduces sub-Riemannian geometry on a smooth manifold $\mathcal{N}$ 
of dimension $d$, 
not necessarily $\R^d$. A rank \textit{k} sub-Riemannian structure on $\mathcal{N}$ is determined by a rank \textit{k} subbundle $\calD$ of $T\mathcal{N}$ and a metric tensor $g$ on $\calD$. We will assume that the sub-Riemannian metric tensor $g$ is the restriction $h\vert_\calD$ of a given Riemannian metric tensor $h$ on $T\mathcal{N}$ to $\calD$, i.e. $g_x(u,v) = h_x(u,v)$ for all $x \in \mathcal{N}$ and $u,v \in \calD_x$. The pair $(\calD, g)$ is equivalent to a rank \textit{k} cometric tensor $g^{\star}$ on $T^\star \mathcal{N}$. The triple $(\mathcal{N}, g, \calD)$, or equivalently the pair $(\mathcal{N}, g^\star)$, is called a sub-Riemannian manifold. The version of sub-Riemannian geometry we described and used in the previous sections corresponds to $\mathcal{N} = \R^d$ and the ambient Riemannian metric $h$ being the Euclidean metric.

In this general setting, a curve $\gamma : [0, T] \to \mathcal{N}$ is still called horizontal if its velocities satisfy $\dot{\gamma}_t \in \calD_{\gamma_t} \subset T_{\gamma_t}\mathcal{N}$ for all $t \in [0,T]$. And this again induces the Carnot-Carath\'eodory distance metric $d^\calD$ (equation \ref{eq_carnot_carath_metric}) on $\mathcal{N}$. The discussion in Section \ref{sect_integrability} about integrability and foliations carries over directly; the subbundle $\calD$ partitions $\mathcal{N}$ into a foliation of submanifolds of dimension $\Lie \calD$, and the distance metric $d^\calD(x,y)$ is finite only between points on the same leaf. The Hamiltonian equations, $\exp$ and $\log$ are also defined exactly as in Section \ref{sect_sr_geom_eucl}, and the relationship between the sub-Riemannian distance and the Hamiltonian (Eq. (\ref{eq_ham_vs_distance})) still holds. One difference from the previous Euclidean setting, however, is that the cometric cannot be expressed as a projection matrix, as we did in Equation (\ref{eq_cometric_eucl}).
Therefore it is more convenient to represent the Hamiltonian in the following equivalent way (see \cite{agrachev2019comprehensive}, Proposition 4.22 for a derivation,
\begin{align}\label{eq_ham_general_manif}
    H : T\mathcal{N} \to \R_{\geq 0} : H(x,\eta) = \frac{1}{2} \sum_{i=1}^d \left(\eta(f_i(x))\right)^2,
\end{align}
where $\{f_i\}_{i=1..k}$ is an orthonormal frame for $\calD$ w.r.t. $g$ and $\eta(f_i(x))$ denotes the cotangent $\eta \in T^{\star}_x \mathcal{N}$ evaluated at the tangent $f_i(x) \in T_x \mathcal{N}$. The derivatives of the Hamiltonian that enter into the Hamiltonian equations can be expanded in a way that is suitable for implementation (see Equation (4.38) in \cite{agrachev2019comprehensive}).

To construct a $k$-dimensional submanifold from a $k$-dimensional \textit{non-integrable} subbundle we still need a result such as Proposition \ref{prop_sr_exp_diffeo}, which luckily holds in this general setting - cf. the proof in Appendix \ref{app_proof_exp_local_diffeo}. 
The result carries over verbatim, with the dual subbundle now being the dual w.r.t. our (general) Riemannian metric $h$ on $\mathcal{N}$, i.e. $$\calD^\star_x \defeq \{h_x(v, \cdot) \vert v\in\calD_x\} \subset T_x^{\star} \mathcal{N}.$$

\subsection{Principal subbundles on a Riemannian manifold}

We now generalize local PCA to the setting of observations on a Riemannian manifold. In this setting, local PCA is exchanged for local \textit{tangent} PCA, by which we mean the extraction of eigenvectors from the following second moment.

\begin{definition}[Non-centered weighted tangent second moment on a Riemannian manifold]\label{def_weighted_2nd_moment_manif} 
Let $\{x_1,\dots,x_N\}$ be observations on a Riemannian manifold $(\mathcal{N}, h)$. Let $K_{\alpha} : \mathbb{R}_{\geq 0} \to \mathbb{R}_{\geq 0}$ be a smooth, decaying kernel function with range parameter $\alpha > 0$. 
At a point $p\in \mathcal{N}$, we denote by $\log^h_p(x_i)$
the Riemannian $\log$ of the observation point $x_i$  w.r.t. metric $h$. 
The \emph{weighted tangent second moment} is defined by
\begin{align*}
\Sigma_{\alpha}(p) = 
\sum_{i=1}^{N} w_i(p) \left(\log^h_{p}(x_i) \otimes \log^h_{p}(x_i)\right),
\end{align*}
\noindent with normalized weight functions
\begin{equation}\label{eq_weight_functions_manif}
    w_i \hspace{1mm}: \hspace{1mm} \mathcal{N} \to  \R_{\geq 0} \hspace{1mm} : \hspace{1mm}  p\mapsto w_i(p) = \frac{K_\alpha(\Vert \log^h_{p}(x_i) \Vert_{p})}{\sum_{j=1}^N K_\alpha(\Vert \log^h_{p}(x_j) \Vert_{p})}.
\end{equation} 
\end{definition}

\begin{remark}
    Recall that $\Vert \log^h_{p}(x_i) \Vert_{p}= d^h(p, x_i)$ since the length of the shortest geodesic from $p$ to $x_i$ is precisely the length of the vector in $T_p \mathcal{M}$ that exponentiates to $x_i$.  
\end{remark}

For any $v,u \in T_p \mathcal{N}$, the tensor product $v \otimes u$ can be identified with a linear map on $T_p \mathcal{N}$ (an endomorphism), whose coordinate representation is a $d \times d$ matrix, see Lemma \ref{lemTensorCoordinates}. There is some vagueness about the exact form of this coordinate representation in the geometric statistics literature, so we give a detailed proof in Appendix \ref{app_second_moment_as_tensor}.

 \begin{restatable}{lemma}{lemTensorCoordinates}\label{lemTensorCoordinates}
        Let $(\mathcal{N},h)$ be a Riemannian manifold, and $u,v\in T_p\mathcal{N}$. Given a choice of basis for $T_p \mathcal{N}$, the tensor $v \otimes u \in T_p{\mathcal{N}}\otimes T_p{\mathcal{N}}$ can be expressed in coordinates as
    \begin{align}\label{eq_matrix_repr}
        vu^T h_p \in \R^{d\times d},
    \end{align}
    where $u,v \in \R^{d\times 1}$ are the vectors  and $h_p \in \R^{d\times d}$ is the Riemannian metric represented w.r.t. the chosen basis. 
\end{restatable}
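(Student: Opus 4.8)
The plan is to spell out precisely which identification turns a rank-two tensor $v\otimes u\in T_p\mathcal{N}\otimes T_p\mathcal{N}$ into an endomorphism of $T_p\mathcal{N}$ — this is exactly the point the geometric-statistics literature is vague about — and then read off the matrix in the chosen basis. First I would use the metric $h_p$ to lower one index: let $\flat:T_p\mathcal{N}\to T_p^{\star}\mathcal{N}$, $\flat(w)=h_p(w,\cdot)$, be the musical isomorphism, and compose $\id\otimes\flat$ with the canonical isomorphism $T_p\mathcal{N}\otimes T_p^{\star}\mathcal{N}\cong\End(T_p\mathcal{N})$ under which $v\otimes\xi$ is the map $w\mapsto\xi(w)\,v$. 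This yields the endomorphism that Definition \ref{def_weighted_2nd_moment_manif} implicitly refers to, namely
\[
(v\otimes u)(w)\;=\;h_p(u,w)\,v,\qquad w\in T_p\mathcal{N},
\]
and I would take this invariant formula as the meaning of ``the tensor $v\otimes u$ viewed in coordinates as a matrix''.

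Next I would pass to coordinates. Fix a basis $\{b_1,\dots,b_d\}$ of $T_p\mathcal{N}$, write $u,v,w$ for the column vectors of components, and let $h_p=(h_{ij})$ with $h_{ij}=h_p(b_i,b_j)$ be the Gram matrix of $h_p$. Then $h_p(u,w)=\sum_{i,j}u^i h_{ij} w^j = u^T h_p\, w$ is a scalar, so
\[
(v\otimes u)(w)\;=\;(u^T h_p\, w)\,v\;=\;\bigl(v\,u^T h_p\bigr)\,w .
\]
Since this holds for every $w$, the matrix of the endomorphism $v\otimes u$ with respect to $\{b_i\}$ equals $v\,u^T h_p\in\R^{d\times d}$, which is the assertion; a size check ($d\times1$ times $1\times d$ times $d\times d$) confirms the product is $d\times d$.

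I do not anticipate a real obstacle; the care required — and the reason a written-out proof is warranted — lies in two places. First, there are two a priori natural index-lowerings (lowering the slot of $u$ versus that of $v$), and one must commit to the one that gives $v\,u^T h_p$; I would note that, by the symmetry $h_{ij}=h_{ji}$, the scalar factor is unambiguous ($u^T h_p w = w^T h_p u$), so only the choice of which vector remains ``upstairs'' matters. Second, I would record the consequence relevant to the rest of the paper: by linearity the second moment of Definition \ref{def_weighted_2nd_moment_manif} is represented by $\sum_i w_i(p)\,\log^h_p(x_i)\,\log^h_p(x_i)^T h_p$, and each summand $v v^T h_p$ (hence the sum) is self-adjoint with respect to $h_p$, i.e. $(v v^T h_p)^T h_p = h_p\,(v v^T h_p)$, so that its eigendecomposition — the local tangent PCA — is well posed.
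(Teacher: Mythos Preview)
Your proof is correct and follows essentially the same route as the paper: lower one index via the musical isomorphism $\flat(w)=h_p(w,\cdot)$, use the canonical identification $T_p\mathcal{N}\otimes T_p^\star\mathcal{N}\cong\End(T_p\mathcal{N})$ sending $v\otimes\xi$ to $w\mapsto\xi(w)\,v$, and read off the matrix $vu^T h_p$ in coordinates. Your additional remarks on the choice of which slot to lower and on $h_p$-self-adjointness of $vv^T h_p$ are useful complements; the paper instead follows the lemma with a short check that $vu^T h_p$ transforms correctly under change of basis.
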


In various sources, the term $h_p$ in Eq. (\ref{eq_matrix_repr}) is omitted without explanation. We stress that this is only correct if the chosen coordinate representation of the metric is the identity matrix, e.g. if the chart is a normal chart - which is not necessarily the case in numerical computations. Sometimes, this is ensured by changing the basis to an orthonormal one, found by e.g. Cholesky decomposition of the cometric, before computing $vu^T$.
This is, however, much more expensive than simply using the general, basis independent, expression (\ref{eq_matrix_repr}).
Thus, when computing the tangent second moment matrix (e.g. when computing tangent PCA), the covariance matrix w.r.t. some arbitrary basis \textit{a} should be computed as
\begin{equation*}
\left[\Sigma_{\alpha}(p)\right]_a = 
\sum_{i=1}^{N} w_i(p) \left[\log^h_{p}(x_i)\right]_a \left( \left[\log^h_{p}(x_i)\right]_a \right)^T [h_p]_a.
\end{equation*}
    
As in the case of Euclidean valued data, we want the principal subbundle of $T\mathcal{N}$ to be based on local PCA's centered around local means. For that purpose, the principal subbundle subspace at point \textit{p} will be based on the eigendecomposition of the weighted second moment at the weighted mean $m(p)$ defined below:
\begin{definition}[Weighted tangent mean map on a Riemannian manifold]\label{def_weighted_tangent_mean} 
Let $\{x_1,\dots,x_N\}\subset \mathcal{N}$ be observations on a Riemannian manifold $(\mathcal{N}, h)$, let the normalized weight functions $w_i$
be defined as in (\ref{eq_weight_functions_manif}), and let $\exp^h_p$ be the Riemannian exponential map at $p$ w.r.t. metric $h$. The \emph{weighted tangent mean map} is defined by
\begin{equation}\label{eq_weighted_mean}
   m\: \hspace{1mm}: \hspace{1mm} \mathcal{N} \to  \mathcal{N} \hspace{1mm} : \hspace{1mm}  p \mapsto m(p) = \exp^h_p\left(
    \sum_{i=1}^N w_i(p) \log^h_p(x_i)\right).
\end{equation}

\end{definition}

The eigenvectors of $\Sigma_{\alpha}(m(p))$ belong to the tangent space at $m(p)$, not the tangent space at $p$. 
Thus, the extracted eigenvectors needs to be mapped back to the tangent space at $p$, which we do by parallel transport, as described in the definition below. 

The principal subbundle on $(\mathcal{N}, h)$ can only be defined at points  $p$ s.t. both $p$ and $m(p)$ is in the cut locus of every observation and of each other, since we need to compute the corresponding logarithms. We therefore define the set of singular points as follows,

\begin{align}\label{eq_singular_set_manif}
S^{'}_{\alpha, k} = \bigg\{p \in \mathcal{N} \hspace{0.8mm} \vert \hspace{0.8mm} &p, m(p) \in \bigcup_{q \in \{x_1, \dots, x_N, p\}}\text{Cut}(q) \hspace{3mm}\text{or}\hspace{3mm} \lambda_k(m(p)) = \lambda_{k+1}(m(p))\bigg\},
\end{align}

\vspace{2mm}

\noindent where $\lambda_i(m(p))$ is the $i$'th eigenvalue of $\Sigma_{\alpha}(m(p))$ of Definition \ref{def_weighted_2nd_moment_manif}. 

\begin{definition}[Principal subbundle on a Riemannian manifold]\label{def_PS_manif} Let $\lambda_1(q) \geq \dots \geq \lambda_d(q)$ be the eigenvalues of $\Sigma_{\alpha}(q)$, at $q \in \mathcal{N}$, with associated eigenvectors $e_1(q),\dots, e_d(q)$. 
Let $\Pi_x^y(v)$ denote parallel transport of $v \in T_x \mathcal{N}$ to $T_y \mathcal{N}$ along the length-minimizing geodesic between $x$ and $y$. Then the \emph{principal subbundle} $\mathcal{E}^{k, \alpha} \subset T \mathcal{N}$ is defined as
\begin{align*}
\mathcal{E}^{k, \alpha} = \biggl\{(p,v) \hspace{1mm}\vert\hspace{1mm} &p \in \mathcal{N} \setminus \mathcal{S}^{'}_{\alpha, k}, \\
&v\in \text{span}\left\{\Pi_{m(p)}^p e_1(m(p)),\dots, \Pi_{m(p)}^p e_k(m(p))\right\}\biggr\}    
\end{align*}
\end{definition}
\begin{remark}
    If $(\mathcal{N}, h)$ is Euclidean space, the above definition reduces to the Euclidean Definition \ref{def_PS} since $\left(\log^h_{q}(x_i) \otimes \log^h_{q}(x_i)\right) = (x_i - q)(x_i - q)^T$ and $\Pi_q^p$ is the identity map for $q \in \R^d$. 
\end{remark}
\begin{remark}\label{rem_PS_manif_approx_tangent_space} The above construction of the subbundle subspace at $p$ can be approximated by using the Euclidean definition in the tangent space at $p$, i.e. by letting $\mathcal{E}^{k, \alpha}_p$ at $p\in \mathcal{N}$ be the span of eigenvectors of $\Sigma_{\alpha}(0)$ computed from vectors $\{\log^h_p(x_i)\}_{i=1\dots N}\subset T_p \mathcal{N} \cong \R^d$, where $\Sigma_{\alpha}$ is the Euclidean second moment from Definition \ref{def_weighted_2nd_moment}. In this way, only $N$ log's have to be computed, instead of $2N$ (see Algorithm \ref{algo_principal_subbundle_general}), and the parallel transport operation is omitted. Note that the experiments in Section \ref{sect_applications_manif_data} uses Definition \ref{def_PS_manif}, not the described approximation. 
\end{remark}

Algorithm \ref{algo_principal_subbundle_general} describes how to compute the principal subbundle from data on a Riemannian manifold $(\mathcal{N},h)$. 

As in the Euclidean case, we prove that the principal subbundle on $(\mathcal{N}, h)$ is smooth at all points where it is defined.

\begin{restatable}{proposition}{propSmoothPsManif}
    The principal subbundle, defined on $\mathcal{N} \setminus S^{'}_{\alpha, k}$, is smooth. 
\end{restatable}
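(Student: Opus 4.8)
The plan is to follow the strategy of the proof of Proposition~\ref{prop_smooth_princ_subb}, inserting the extra geometric ingredients that appear in Definition~\ref{def_PS_manif}: the Riemannian logarithm, the weighted tangent mean map $m$, the second moment $\Sigma_{\alpha}$ read in a local frame of $T\mathcal{N}$, and parallel transport. Smoothness of a rank-$k$ subbundle is a local statement, so I fix $p_0\in\mathcal{N}\setminus S'_{\alpha,k}$ and produce an open neighbourhood of $p_0$ on which $\mathcal{E}^{k,\alpha}$ admits a smooth frame. First I would check that $\mathcal{N}\setminus S'_{\alpha,k}$ is open: the set $\bigcup_{q\in\{x_1,\dots,x_N\}}\mathrm{Cut}(q)$ is closed, the map $m$ is continuous (a composition of continuous maps), and $q\mapsto\Sigma_{\alpha}(q)$ is continuous, so the requirements that $p$ and $m(p)$ avoid every $\mathrm{Cut}(q)$ for $q\in\{x_1,\dots,x_N,p\}$ and that $\lambda_k(m(p))>\lambda_{k+1}(m(p))$ cut out an open set; hence there is an open $U_0\ni p_0$ with $U_0\subset\mathcal{N}\setminus S'_{\alpha,k}$.

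Next I would establish smoothness of each building block on a (possibly smaller) neighbourhood $U\subset U_0$. On the complement of $\mathrm{Cut}(x_i)$ the map $p\mapsto\log^h_p(x_i)$ is smooth, hence so is $p\mapsto\Vert\log^h_p(x_i)\Vert_p=d^h(p,x_i)$; since $K_{\alpha}$ is smooth and the denominator in~(\ref{eq_weight_functions_manif}) is strictly positive on $U_0$, the weights $w_i$ are smooth, and therefore $m$ is smooth on $U_0$. Fixing a coordinate chart (equivalently, a smooth local frame of $T\mathcal{N}$) around $m(p_0)$ and shrinking $U$ so that $m(U)$ lies in its domain, the map $q\mapsto\Sigma_{\alpha}(q)$ is represented by a smooth matrix-valued function by Lemma~\ref{lemTensorCoordinates} (its entries are smooth combinations of the $w_i$, of the coordinates of $\log^h_q(x_i)$, and of the metric coefficients $[h_q]_a$), so $p\mapsto\Sigma_{\alpha}(m(p))$ is smooth. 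Finally, parallel transport $\Pi^p_{m(p)}$ along the unique minimizing geodesic from $m(p)$ to $p$ is smooth in $p$ on $U$: away from the cut locus this geodesic depends smoothly on its endpoints, and parallel transport along it is the solution of a linear ODE whose coefficients vary smoothly with those endpoints.

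The last step is the spectral argument, which is identical to the Euclidean one. Since $\lambda_k(m(p_0))>\lambda_{k+1}(m(p_0))$, for $q$ near $m(p_0)$ one may fix a positively oriented contour $\Gamma\subset\mathbb{C}$ enclosing exactly the top $k$ eigenvalues of the matrix of $\Sigma_{\alpha}(q)$, so that the Riesz projector $P(q)=\frac{1}{2\pi i}\oint_{\Gamma}(z\,\id-\Sigma_{\alpha}(q))^{-1}\,dz$ onto $\mathrm{span}\{e_1(q),\dots,e_k(q)\}$ depends smoothly on $q$ (equivalently, invoke \cite{sun1990multiple}, or \cite{sun1985eigenvalues} together with Gram--Schmidt when the first $k+1$ eigenvalues are simple). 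Shrinking $U$ so that $m(U)$ lies in this neighbourhood and picking $k$ columns that span $\mathrm{Im}\,P(m(p_0))$, the corresponding columns of $P(q)$ give a smooth local frame for the top eigenspace near $m(p_0)$; composing with $m$ and applying the smooth bundle isomorphism $\Pi^p_{m(p)}$ yields a smooth local frame for $\mathcal{E}^{k,\alpha}$ on $U$ (a Gram--Schmidt orthonormalization in $h_p$ preserves smoothness if an orthonormal frame is desired). Since $p_0$ was arbitrary, $\mathcal{E}^{k,\alpha}$ is smooth.

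I expect the main obstacle to be the bookkeeping around the cut locus: making sure that $p$, $m(p)$ and the minimizing geodesic between them all stay in the region where $\log^h$, the metric coefficients and parallel transport are genuinely smooth, and in particular carefully verifying the openness claim so that one neighbourhood $U$ works for every ingredient simultaneously. Once that is set up, the smoothness of $m$, the local representation of $\Sigma_{\alpha}$, the Riesz-projector step and the frame construction are routine.
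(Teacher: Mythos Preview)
Your proposal is correct and follows essentially the same route as the paper: establish smoothness of the building blocks ($\log^h$, weights, the mean map $m$, and the coordinate representation of $\Sigma_{\alpha}\circ m$), apply a spectral smoothness argument (the paper reduces to Proposition~\ref{prop_smooth_princ_subb} via charts and Sun's results, where you invoke the Riesz projector directly, which is equivalent), and then handle parallel transport via the smooth dependence of ODE solutions on initial data (the paper packages this as Lemma~\ref{lem_smooth_par_transp_field}). Your explicit openness check for $\mathcal{N}\setminus S'_{\alpha,k}$ is a welcome addition that the paper leaves implicit.
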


\begin{algorithm}[!h] 
\caption{Computing the principal subbundle at a point on a Riemannian manifold}\label{algo_principal_subbundle_general}
\begin{algorithmic}[1]
\Require{
\Statex
\begin{itemize}
       \item Observations $\{x_i\}_{i=1\dots N}$ on a Riemannian manifold $\mathcal{N}$ of dimension $d$, and a point $p \in \mathcal{N}$, satisfying $m(p) \in \mathcal{N} \setminus \mathcal{S}_{\alpha, k}$, at which to compute the subbundle subspace.
       \item Parameters $\alpha \in (0, \infty)$ (range of the kernel),  $k \in \{1, \dots, d-1\}$ (dimension of the subspace).
       \vspace{2mm}
  \end{itemize}
} 
\Ensure{
\Statex
 A set of vectors spanning the principal subbundle subspace at \textit{p}, $\calE_p \subset T_p \mathcal{N}$. \vspace{2mm}
}
   
\Statex 
\For{{$i=1$ {\bfseries to} $N$}} 
   \State Compute the normalized weight 
   $w_i(p) = \frac{K_\alpha(\Vert \log^h_{p}(x_i) \Vert_{p})}{\sum_{j=1}^N K_\alpha(\Vert \log^h_{p}(x_j) \Vert_{p})}.$
\EndFor
\vspace{2mm}

\State Compute the weighted mean $m(p) \in \mathcal{N}$ around $p$ (Eq. (\ref{eq_weighted_mean})).

\Statex 
\For{{$i=1$ {\bfseries to} $N$}} 
   \State Compute the normalized weight  at $m(p)$, 
   \[ w_i(m(p)) = \frac{K_\alpha(\Vert \log^h_{m(p)}(x_i)  \Vert_{m(p)})}{\sum_{j=1}^N K_\alpha(\Vert \log^h_{m(p)}(x_j)  \Vert_{m(p)})}. \]
\EndFor

   \vspace{2mm}
   
   \State Compute the weighted second moment at the weighted mean, 
    \[\Sigma_{\alpha}(m(p)) = \sum_{i=1}^{N} w_i(m(p)) \left(\log^h_{m(p)}(x_i) \log^h_{m(p)}(x_i)^T h_p\right).\] 

   \State Eigendecompose $\Sigma_{\alpha}(m(p))$ and select the $k$ eigenvectors $\{e_1, \dots,e_k\}$ with the largest $k$ eigenvalues. 

   \vspace{2mm}

   \State Parallel transport each eigenvector from $T_{m(p)} \mathcal{N}$ to $T_p \mathcal{N}$ along the length-minimizing geodesic between $m(p)$ and $p$, yielding $e_i^{\star} \defeq \Pi_{m(p)}^p e_i \in T_p \mathcal{N}$.

\vspace{2mm}
    \State \Return {$\{e_1^\star, \dots, e_k^\star\}$. }
\end{algorithmic}
\end{algorithm}

\subsection{Computing with a principal subbundle on a Riemannian manifold}

Given a dataset $\{x_1,\dots,x_N\}\subset \mathcal{N}$, the associated principal subbundle $\calE$ determines a sub-Riemannian structure on $\mathcal{N}$, namely $(\mathcal{N}, h\vert_\calE, \calE)$. Using this structure, we can integrate the associated sub-Riemannian Hamiltonian equations in the same way as described in section \ref{sect_computing_geos_eucl}, except that we use the expression (\ref{eq_ham_general_manif}) for the Hamiltonian. This gives us sub-Riemannian exponential and logarithmic maps on $\mathcal{N}$, so that problems \textit{A}, \textit{B} and \textit{C} can be solved on a general Riemannian manifold, in exactly the same way as in the Euclidean case, described in sections \ref{sect_PS}-\ref{sect_sr_dist}.

A principal submanifold is computed in the same way as in the Euclidean case (Algorithm \ref{algo_principal_submanifold}). It assumes that 
we have a representation of the manifold in a chart. See the pseudocode for our implementation on the sphere in Appendix.
Due to the centering step, 
computing 
the subbundle at a point $p \in M$ requires solving the parallel transport equation and computing $2N$ log maps, $N$ logs between the observations and point $p$ (lines 1-3), and $N$ logs between the observations and the local mean around $p$ (lines 5-7). See remark \ref{rem_PS_manif_approx_tangent_space} for an approximation requiring only $N$ log computations and no parallel transport. The run time of the algorithm thus depends heavily on the run time of the log map, or an approximation thereof, on the given Riemannian manifold. Examples of manifolds with computationally cheap log maps are hyperspheres, Kendall shape space, Grassmann manifolds, SPD matrices. See the Python library Geomstats \cite{geomstats2020JLMR} for implementations of various manifolds including efficient log maps.

\section{Applications}\label{sect_applications}

We now demonstrate how principal subbundles  provide solutions to problems $A, B, C$, mentioned in the introduction. In particular, we reconstruct 2D submanifolds embedded in $\R^3$ and $\R^{100}$, respectively, and give a 2D tangent space representation of the latter. We furthermore evaluate a sub-Riemannian distance metric on $\R^{50}$ learned from observations  distributed around a $4$-dimensional sphere embedded in $\R^{50}$. In subsection \ref{sect_applications_manif_data} we compute a 1D principal submanifold approximating data on the sphere (a Riemannian manifold).

\subsection{Surface reconstruction in $\R^3$ (problem \textit{B})}\label{sect_surface_recon}

We reconstruct a 2D 
surface, the 'head sculpture', based on a point cloud contained in the surface reconstruction benchmark dataset from \cite{surveySurfacReconstructionHuang}. According to the classification in \cite{surveySurfacReconstructionHuang}, the surface is of complexity level 2 out of 3, and the point cloud has been added noise of level 2 out of 3, see \cite{surveySurfacReconstructionHuang} for details. Note that the evaluations in the benchmarking paper was made after a preliminary denoising step, whereas our reconstruction was done on the raw point cloud. This is to illustrate the potential use of principal submanifolds for denoising. The hyperparameters we use for the principal subbundle are $\alpha = 0.001$, and $k = 2$. See Appendix \ref{app_surface_recon_high_noise} for a reconstruction of the face using observations distorted by noise level 3 out of 3.  

Figure \ref{fig_surface_recon} shows two principal submanifolds reconstructing the head sculpture locally: one is based around the tip of the nose (radius $r=0.3$) and one at the top left side of the head ($r=0.25$). Both base points are computed as the kernel-weighted mean around a chosen observation. The numerical parameters in Algorithm \ref{algo_principal_submanifold}, determining the resolution, were $L = 2500$ (the number of geodesics) and $\Delta = 0.001$ (the integration stepsize).

A principal submanifold corresponds to a chart on the surface; in particular, a normal chart. It is a basic fact of differential geometry that a complicated surface such as the head sculpture cannot be covered by a single such chart. One therefore needs to reconstruct the surface based on multiple principal submanifolds corresponding to different base points; however, principal submanifolds based at different points might not overlap in a smooth way due to noise. To construct a smooth surface covering the whole area, we thus need a scheme for combining different principal submanifolds $M^{\mathcal{E}}_{\mu_1},M^{\mathcal{E}}_{\mu_2}, \dots$. Many such schemes are conceivable. In appendix \ref{app_combining_PS}, we propose one that combines submanifolds by weighing points according to their sub-Riemannian distance to a set of nearest base points. The discrepancy between submanifolds in the areas of overlap depends on the level of noise. In the experiment shown in Figure \ref{fig_surface_recon} we did not find it necessary to use a weighing scheme - see Appendix \ref{app_overlapping} for a close-up illustration of the overlap.


\subsection{Unfolding the S-surface in $\R^{100}$ (problem \textit{C})}\label{sect_s_surface}

In this experiment, we demonstrate the use of principal subbundles to contruct a representation of $\R^d$-valued data in $\R^k$, $k < d$. Let $y_i \defeq ((y_i)_1, (y_i)_2, (y_i)_3)^T \in \R^3$, $i=1..3000$,
 be points on the S-surface, scaled such that its height, width and depth is 1. We embed each point in $\R^d$, $d = 100$, by adding zeros, $\tilde{y_i} = \left((y_i)_1, (y_i)_2, (y_i)_3, 0, \dots, 0\right)^T$. The observations are then generated by adding Gaussian noise,  $x_i \sim N(\tilde{y_i}, \sigma^2 I_d) \in \R^d$ for $\sigma = 0.025$.

The upper part of Figure \ref{fig_s_surface} shows the observations $\{x_i\}_{i=1..N}$ and an approximating principal submanifold, projected to $\R^3$ for the purpose of visualization. The base point of the principal submanifold is the local mean around the within-sample Fréchet mean w.r.t. Euclidean distance, $\mu = (0.47, 0.47, 0.49)$. The lower part of Figure \ref{fig_s_surface} shows the \textit{log} representation of the observations in $\calE_\mu^{\star} \cong T_{\mu}M_\alpha^k$. The kernel range is $\alpha =0.01$ and the rank is $k = 2$.

\subsection{Learning a distance metric on $\R^{50}$ (problem \textit{A})}\label{sect_learning_distance_metric}

We sample $N=10000$
points, $\{y_i\}_{i=1..N}$, uniformly on the \textit{k}-dimensional unit sphere embedded in $\R^d$, for $k = 4$,
$d=50$.
 For each of these points $y_i \in \R^d$ we generate an observation $x_i \in \R^d$ by adding $d$-dimensional Gaussian noise, $x_i \sim N(y_i, \sigma I_d)$, where $\sigma=0.01$.

We generate 20 such data sets with associated principal subbundles $\calE_j$, $j=1..20$.
For each data set we compute the SR distance $d^{\calE_j}(p,q), j=1\dots 20$, where $p=(1,0,\dots,0)\in\R^d$  and $q= (- \sqrt{1/2}, - \sqrt{1/2}, 0, \dots, 0)\in\R^d$. We find the mean, $\mu_0$, and standard deviation, $\sigma_0$, of these 20 computed distances to be $\mu_0 = \frac{1}{20} \sum_{j=1}^{20} d^{\calE_j}(p,q) = \frac{3}{4}\pi + 0.023$, $\sigma_0 = 0.025$. This result shows that the learned distances are close to true distance, $d^{\mathbb{S}^4}(p,q) = \frac{3}{4}\pi$, on the $4$-dimensional sphere.

\subsection{Curve approximation on the sphere}\label{sect_applications_manif_data}

In this experiment we randomly generate $20$ datasets, each with $N = 100$ points distributed around a random curve on the sphere, $\mathcal{S}^2$. The random curves are generated as follows. A 4'th order polynomial 
\begin{align}\label{eq_polynomial}
f : \R \to \R : t \mapsto  (t - a_1)(t - a_2)(t - a_3)(t - a_4)    
\end{align}
is generated by sampling roots $a_1, a_2$ from a uniform distribution on $(-1,0)$, and roots $a_3, a_4$ from a uniform distribution on $(0,1)$. Using two such intervals yields polynomials with more complex curvature. The graph of the polynomial, $P \defeq \{t, f(t) \vert t \in [-1,1] \}$, is considered a subset of $T_{p_0} \calS^2$ and mapped to $\mathcal{S}^2$ by the Riemannian exponential, $\exp_{p_0}$, where $p_0 = (0, 0, 1)$ is the north pole (in extrinsic coordinates).
Let $\{t_i\}_{i=1..N} \subset [-1,1]$ be 100 evenly spaced points. Let $z_i = \exp_{p_0}((t_i, f(t_i)))$, $i=1\dots N$, be points on the curve on $\mathbb{S}^2$. The noisy observations are generated as $x_i = \exp_{z_i}(v_i)$, where $v_{i} \sim \text{N}(0, I_2 \cdot \sigma)$, a 2D isotropic Gaussian with marginal variance $\sigma$, assuming a representation of $T_{z_i}\mathbb{S}^2$ in an orthonormal basis. In our experiments we used $\sigma = 5\cdot 10^{-4}$. Note8 that the resulting observations on $\mathbb{S}^2$ are non-uniformly sampled along the curve (making the problem more difficult). See Figure \ref{fig_sphere_curves} for an example of such a randomly generated dataset.

For each randomly generated dataset we estimate a base point as the within-sample Fréchet mean w.r.t. the geodesic distance on the sphere. We use as kernel function a Gaussian density with standard deviation $\alpha = 0.045$. This value is hand picked since our aim is to compare the performance of different methods disregarding uncertainty due to estimation of hyperparameters. Using this kernel function, we compute 3 curve approximations of the data set. Firstly, we compute the principal submanifold using Algorithm \ref{algo_principal_submanifold}.  Secondly, we compute the Principal submanifold without the centering and parallel transport step, i.e. the Principal flow \cite{panaretos2014principal}. Thirdly, we compute as baseline model the first principal geodesic from tangent PCA. For each approximation we compute the sum of squared errors (SSE), where the errors are measured by the length of the geodesic joining observation $x_i$ and its geodesic projection to the given curve. Figure \ref{fig_sphere_curves} shows an example data set and its 3 curve approximations. Figure \ref{fig_sphere_boxplots} shows boxplots summarizing the 20 SSE's computed for each approximation method. 

The SSE's and visual inspection of the corresponding plots shows that the centered version of the Principal submanifold is significantly more stable than the uncentered version (the principal flow). The uncentered version tends to stray away from the data when it reaches positions slightly outside of the point cloud. This is as expected, c.f. our discussion in Section \ref{sect_PS}. The principal geodesic has the highest SSE, as expected for this type of data that is distributed around a curve with relatively high curvature.

\begin{figure}[h!]
\centering
\begin{minipage}[t]{.5\textwidth}
  \centering
  \includegraphics[scale=0.35]{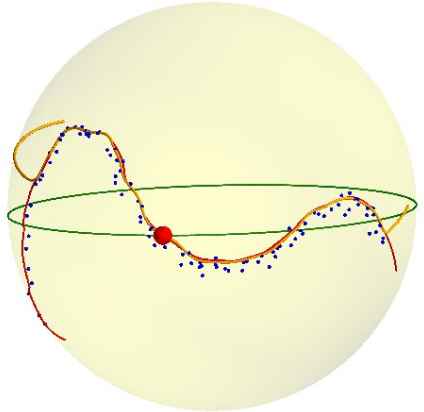}
  \caption{Three curves approximating a set of observations (blue points) on the sphere $\mathbb{S}^2$. The green curve is the principal geodesic computed by tangent PCA centered at the red point. The yellow curve is the principal submanifold based on a non-centered second moment (i.e. it is a principal flow). The red curve is the principal submanifold based on our proposed centered second moment. The base point of both principal submanifolds is the red point.}
  \label{fig_sphere_curves}
\end{minipage}%
\begin{minipage}[t]{.5\textwidth}
  \centering
  \includegraphics[width=1\linewidth]{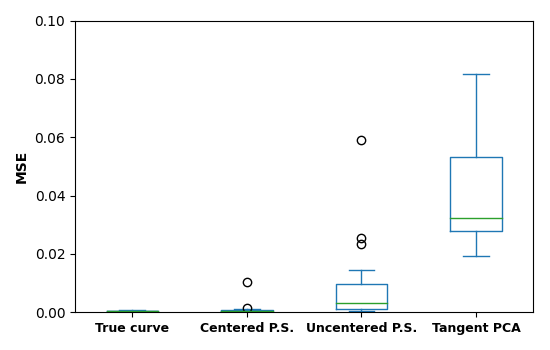}
  \captionsetup{width=.9\linewidth}
  \captionof{figure}{A box plot comparing the sum of squared errors (SSE), measured w.r.t. geodesic distance on the sphere, for each curve approximation. The 'True curve' label refers to the SSE for the curve $t\mapsto \exp_{p_0}\circ (t, f(t))$, described in Section \ref{sect_applications_manif_data}, from which noisy samples are generated. The other labels refers to the curves described in the caption of Figure \ref{fig_sphere_curves}, with 'P.S.' abbreviating principal submanifold.}
  \label{fig_sphere_boxplots}
\end{minipage}
\end{figure}

\section{Discussion and further work}

We have introduced the idea of modelling a data set $\{x_1,\dots,x_N\} \subset \R^d$ by a tangent subbundle consisting of affine subspaces of $\R^d$, and the sub-Riemannian geometry that it induces. We have demonstrated that geodesics w.r.t. this sub-Riemannian structure can be used to solve a number of important problems in statistics and machine learning, such as: reconstruction of submanifolds approximating the observations, finding lower dimensional representations and computing geometry-aware distances. Furthermore, we have shown that the framework generalizes to datasets on a given Riemannian manifold.

It can be considered a drawback of the framework that the point cloud must be relatively well connected, in the sense of not having large 'holes' or disconnected parts, relative to the kernel range. However, we conjecture that this can be somewhat alleviated by introducing a position-dependent range parameter.

\section*{Acknowledgements}

M.A., J.B. and X.P. are supported by the European Research Council (ERC) under the EU Horizon 2020 research and innovation program (grant agreement G-Statistics No. 786854). S.S. is partly supported by Novo Nordisk Foundation grant NNF18OC0052000 as well as Villum Foundation research grant 40582 and UCPH Data+ Strategy 2023 funds for interdisciplinary research. E.G. is supported by project GeoProCo from the Trond Mohn Foundation - Grant TMS2021STG02.

\printbibliography

\appendix

\section{Proofs}\label{app_proofs}

\subsection{Smoothness of the principal subbundle}\label{app_smoothness_eig}

We show smoothness first on $\R^d$ and then on a Riemannian manifold $(\mathcal{N}, h)$. The proof of the latter utilizes the former result in a chart, as well as smoothness results for the involved maps, which are only non-trivial in the manifold case.

\propSmoothPSrd*
\begin{proof}
    Let $p \in \R^d \setminus \mathcal{S}_{\alpha,k}$ be arbitrary. We will show that there exists a local frame of smooth vector fields spanning the subspace $\mathcal{E}^{\alpha,k}_{p'}$ at every point $p'$ on an open set $\mathscr{U}$ around $p$. By Lemma 10.32 in \cite{lee2013smooth}, this is equivalent to the subbundle being smooth on $\R^d\setminus \mathcal{S}_{\alpha,k}$.
    
    The eigenvalues of $\Sigma_{\alpha}(p)$ at p are $$\lambda_1(p) \geq \dots \geq \lambda_k(p)
   > \lambda_{k+1}(p)
   \geq 
   \dots \geq \lambda_d(p),$$ 
   where only $\lambda_k$ and $\lambda_{k+1}$ are assumed to be different. Since $\Sigma_{\alpha} : \R^d \to \R^{d\times d}$ is a smooth map, Theorem 3.1 of \cite{sun1990multiple} implies that there exists an open set $\mathscr{B}(p) \subset \R^d$ around $p$ and \textit{d} continuous functions $\bar{\lambda}_i(\cdot) : \mathscr{B}(p) \to \R$ satisfying that $\bar{\lambda}_i(p')$ is an eigenvalue of $\Sigma_{\alpha}(p')$ for all $p' \in \mathscr{B}$ and $\bar{\lambda}_i(p) = \lambda_i(p), i = 1\dots d$. 
    
    Since each $\bar{\lambda}_i$ is continuous, there exists an open subset $\mathscr{U} \subset \mathscr{B}$ on which the ordering $\bar{\lambda}_{1}(p') \geq \dots \geq \bar{\lambda}_{d}(p')$ holds for all $p' \in \mathscr{U}$, and where $\bar{\lambda}_i(p') = \bar{\lambda}_j(p')$ is only possible for $i,j$ s.t. $\bar{\lambda}_i(p) = \bar{\lambda}_j(p)$. In particular $\bar{\lambda}_i(p') < \bar{\lambda}_{k+1}(p')$ for all $i < k + 1$ and $p' \in \mathscr{U}$.
    
    Theorem 3.2 of \cite{sun1990multiple}  now says that there exists a frame of analytic vector fields $p \mapsto \left\{X_1(p), \dots, X_k(p)\right\}$ such that, for all $p'\in \mathscr{U}$, 
    \begin{align*}
    \text{span}\left\{X_1(p'), \dots, X_k(p')\right\} &= V_{\bar{\lambda}_1(p'), \dots, \bar{\lambda}_k(p')}(\Sigma_{\alpha}(p'))
    \end{align*}
    
\noindent where $V_{\bar{\lambda}_1(p'), \dots, \bar{\lambda}_k(p')}(\Sigma_{\alpha}(p'))$ denotes the eigenspace of $\Sigma_{\alpha}(p')$ corresponding to eigenvalues $\bar{\lambda}_1(p'), \dots, \bar{\lambda}_k(p')$, which is exactly the principal subbundle subspace $\mathcal{E}_{p'}^{\alpha, k}$.    
\end{proof}

To show that the principal subbundle on a Riemannian manifold is smooth, we need a result on smoothness of a certain map involving parallel transport.

\begin{lemma}\label{lem_smooth_par_transp_field}
    Let the map $f : \mathcal{N} \to \mathcal{N}$ and the vector field $O$ on $\mathcal{N}$ be smooth. Let $\Pi_{x}^y : T_x \mathcal{N} \to T_y \mathcal{N}$ denote parallel transport along the (assumed unique) length-minimizing geodesic from $x$ to $y$. Then the vector field
    \begin{align}\label{eq_smooth_parallel_vector_field}
        p\mapsto \Pi_{f(p)}^{p}O(p) \in T_p \mathcal{N}
    \end{align}
    \noindent is smooth for every $p \notin \text{Cut}(f(p))$.
\end{lemma}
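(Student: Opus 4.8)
The plan is to identify $\Pi_{f(p)}^p$ with the time-$1$ solution operator of a linear ODE whose coefficients and initial data depend smoothly on $p$, and then invoke smooth dependence of solutions of ODEs on initial conditions and parameters. Throughout, fix a point $p_0$ with $p_0 \notin \text{Cut}(f(p_0))$; since smoothness is a local property it suffices to produce an open neighbourhood of $p_0$ on which the vector field $p \mapsto \Pi_{f(p)}^p O(f(p))$ is smooth. The set $\mathcal{W} := \{(x,y) \in \mathcal{N}\times\mathcal{N} \mid y \notin \text{Cut}(x)\}$ is open (a standard fact), and $p \mapsto (f(p),p)$ is smooth, so $U := \{p \mid (f(p),p) \in \mathcal{W}\}$ is an open neighbourhood of $p_0$, and we work on $U$.

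Next I would record that the connecting geodesic varies smoothly with $p$. On the injectivity domain $\mathcal{I} \subset T\mathcal{N}$ — the open set of vectors $v$ for which $t \mapsto \exp_{\pi(v)}(tv)$ is the unique minimizing geodesic on $[0,1]$ and $\exp_{\pi(v)}$ is a local diffeomorphism at $v$, with $\pi : T\mathcal{N}\to\mathcal{N}$ the bundle projection — the map $v \mapsto (\pi(v), \exp(v))$ is a diffeomorphism onto $\mathcal{W}$, so its inverse $(x,y) \mapsto \log_x(y)$ is smooth on $\mathcal{W}$. Hence $v(p) := \log_{f(p)}(p) \in T_{f(p)}\mathcal{N}$ is a smooth map on $U$, and therefore so is the family $\gamma(t,p) := \exp_{f(p)}(t\,v(p))$ of minimizing geodesics joining $f(p)$ to $p$, together with its velocity $\partial_t\gamma(t,p)$, jointly in $(t,p) \in [0,1]\times U$.

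Then comes the ODE step. For $p \in U$, the parallel transport $\Pi_{f(p)}^p W_0$ of a vector $W_0 \in T_{f(p)}\mathcal{N}$ is the value $W(1)$ of the unique solution of $\nabla_{\partial_t\gamma(t,p)} W(t) = 0$, $W(0) = W_0$. Using compactness of $[0,1]$ to cover $\gamma([0,1]\times\{p_0\})$, and hence $\gamma([0,1]\times U')$ for a smaller neighbourhood $U' \ni p_0$, by finitely many coordinate charts (equivalently, writing the equation intrinsically as a linear ODE on $T\mathcal{N}$), this system has the coordinate form $\dot W^i(t) = -\Gamma^i_{jk}(\gamma(t,p))\,\partial_t\gamma^j(t,p)\,W^k(t)$, whose right-hand side is smooth in $(t, W, p)$ and linear in $W$. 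By the theorem on smooth dependence of solutions of ODEs on initial conditions and parameters, $W_0 \mapsto W(1)$ — that is, the family of linear maps $\Pi_{f(p)}^p : T_{f(p)}\mathcal{N} \to T_p\mathcal{N}$ — depends smoothly on $p \in U'$. Since $O$ is a smooth vector field and $f$ is smooth, $p \mapsto O(f(p))$ is smooth, and composing shows $p \mapsto \Pi_{f(p)}^p O(f(p))$ is smooth near $p_0$, which proves the lemma.

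The main obstacle is the purely technical bookkeeping in the ODE step: the object being transported along is a curve $\gamma(\cdot,p)$ rather than a point, so one must use the parametrized version of smooth dependence and deal with the fact that $\gamma([0,1]\times\{p\})$ need not lie in a single coordinate chart, handled by a finite patching argument built on compactness of $[0,1]$ and continuity of $\gamma$ in $p$. The remaining ingredients — openness of $\mathcal{W}$, smoothness of $\exp$, and smoothness of $\log$ off the cut locus — are standard Riemannian geometry and may simply be cited.
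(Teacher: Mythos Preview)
Your proof is correct and follows essentially the same route as the paper: both express parallel transport as the time-$1$ solution of a linear ODE with smoothly $p$-dependent data and invoke smooth dependence of ODE solutions, handling the multi-chart issue by a compactness/patching argument (the paper cites Lee, Prop.~4.32). The only cosmetic difference is that the paper couples the geodesic equation and the parallel transport equation into a single first-order system and appeals to smooth dependence on \emph{initial conditions} $(x,Q,W)=(f(p),\log_{f(p)}p,\,\cdot\,)$, whereas you first establish smoothness of the geodesic family $\gamma(t,p)$ via the $\log$ map and then treat $p$ as a \emph{parameter} in the linear transport ODE; note also that you have (correctly) written $O(f(p))$ as the vector being transported, while the paper's statement and proof write $O(p)$, which is a type mismatch since $\Pi_{f(p)}^{p}$ has domain $T_{f(p)}\mathcal{N}$ --- your version is the one actually needed in the application to $\Pi_{m(p)}^{p}e_i(m(p))$.
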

\begin{proof} For $x,y \in \mathcal{N}$, the parallel transported vector $\Pi_x^y W \in T_y \mathcal{N}$ of $W \in T_x \mathcal{N}$ along a curve $\gamma : (0,1) \to \mathcal{N}$ is the value at time $1$ of a vector field $V$ along $\gamma$ satisfying the linear initial value problem (an ODE)

\begin{align}\label{eq_parallel_transport_single}
    \dot{V}^k(t) &= -V^j(t)\dot{\gamma}^i(t)\Gamma^k_{ij}(\gamma(t)) \\
    V(0) &= W,
\end{align}

\noindent where $\Gamma^k_{ij}$, $i,j,k \in \{1,\dots,d\}$ are the Christoffel symbols determined by the metric $h$. See \cite{lee2018introduction}, Section $4$, for details.

If $\gamma$ is a geodesic with initial velocity $Q \in T_x \mathcal{N}$ then it is a solution to the geodesic equations (equations (\ref{eq_parallel_transport_full_1}) and (\ref{eq_parallel_transport_full_2}), below). In this case, we can write the parallel transport equation and the geodesic equations as a single, coupled, ODE:

\begin{align}\label{eq_parallel_transport_full}
    \dot{V}^k(t) &= -V^j(t)\dot{\gamma}^i(t)\Gamma^k_{ij}(\gamma(t)) \\ \label{eq_parallel_transport_full_1}
    \dot{\gamma}^k(t) &= U^k(t) \\ \label{eq_parallel_transport_full_2}
    \dot{U}^k(t) &= -U^i(t)U^j(t)\Gamma^k_{ij}(\gamma(t))\\
    U(0) &= Q \\
    V(0) &= W \\
    \gamma(0) &= x. \label{eq_parallel_transport_full_last}
\end{align}

\noindent Note that the equation for $V$ is coupled with the equations for $\gamma$ and $U$, but not vice versa, so that, in practice, the whole path $\gamma$ can be computed first, and then subsequently $V$.

This is again a linear initial value problem, and the fundamental theorem for ODE's states that solutions exist, and depend smoothly on the initial conditions $Q,W,x$. This shows smoothness of the parallel transport operator in the case where $\gamma((0,1))$ is contained in a single chart. For the more general case, we refer to the technique used in the proof of Proposition 4.32 in \cite{lee2018introduction} for showing that solutions found on individual charts overlap smoothly. 

The map (\ref{eq_smooth_parallel_vector_field}) takes a point $p \in \mathcal{N}$ to a vector field at time $1$ satisfying equations (\ref{eq_parallel_transport_full})-(\ref{eq_parallel_transport_full_last}). For each $p$, the initial conditions are 

\begin{align*}
    x &= f(p) \\
    Q &= log^h_{f(p)}(p)\\
    W &= O(p)
\end{align*}

\noindent all of which depend smoothly on $p$, if $p \notin \text{Cut}(f(p))$. Since the solution to the ODE depends smoothly on the initial conditions, and since the initial conditions depends smoothly on $p$, the vector field (\ref{eq_smooth_parallel_vector_field}) is smooth.
\end{proof}

\propSmoothPsManif*
\begin{proof} As in the Euclidean case, we want to prove the existence of a smooth frame around every point $p \in \mathcal{S}^{'}_{\alpha, k}$ spanning the subbundle locally around $p$. We will make use of the corresponding result for $\mathcal{N} = \R^d$, in a chart. In order to do this, we need to make sure that all of the involved maps are smooth as a function of $p$. 

The tangent mean map $m : \mathcal{N} \to \mathcal{N}$ and the tensor field $p \mapsto \Sigma_{\alpha}(p) \in T_p \mathcal{N} \otimes T_p \mathcal{N}$ is smooth if each logarithm $\log^h_p(x_i)$, $i=1\dots N$, is smooth as a function of the base point $p \in \mathcal{N}$. This is ensured by the cut locus conditions in $\mathcal{S}^{'}_{\alpha,k}$.

Assuming smoothness of $\Sigma_{\alpha}$, we now consider charts $(U, \varphi)$ on $\mathcal{N}$ and $(O, \phi)$ on $T\mathcal{N}\otimes T\mathcal{N}$, $U\subset \R^d, \varphi : U \to \varphi(U) \subset \mathcal{N}$, respectively $O\subset \R^{d\times d}, \phi : O \to \varphi(O) \subset T\mathcal{N}\otimes T\mathcal{N}$ (identifying each $T_p\mathcal{N}\otimes T_p\mathcal{N}$ with the space of endomorphisms on $T_p\mathcal{N}$, cf. Section \ref{app_second_moment_as_tensor}), around a point $p\in \mathcal{N}$ and $\varphi(p) \in T\mathcal{N}\otimes T\mathcal{N}$. In this chart, $$f := \phi^{-1} \circ \Sigma_{\alpha, k} \circ m \circ \varphi$$ is a smooth map from $\R^d$ to $\R^{d \times d}$. Eigendecomposition of the matrix $f(p')$, $p' \in U$, is independent of the basis and thus of the choice of charts. As shown in the proof of Proposition \ref{prop_smooth_princ_subb}, there exists a smooth frame $p' \mapsto \{X_1(p'), \dots, X_k(p')\}$, $X_i(p')\in \R^{d}$, defined on some open subset $\mathscr{U} \subset \R^d$ around $\varphi^{-1}(p)$ s.t. $$\text{span}\{X_1(p'), \dots, X_k(p')\} = V_{k}(f(p')),\quad \forall p' \in \mathscr{U},$$ where the right hand side is the eigenspace of $f(p')$ corresponding to the largest $k$ eigenvalues. We have thus shown the existence of a smooth frame on $\varphi(U) \subset \mathcal{N}$ spanning the corresponding eigenspaces of $\Sigma_{\alpha} \circ m$ at every point of $\varphi(U)$. 

The last thing we need to take account of is the parallel transport map. Since parallel transport is an isometry, it holds that $$\text{span}\{\Pi_{p'}^{y} X_1(p'), \dots, \Pi_{p'}^{y} X_k(p')\} = \text{span}\{\Pi_{p'}^{y} F_1(p'), \dots, \Pi_{p'}^{y} F_k(p')\} \subset T_y\mathcal{N},$$ where $\{F\}_{i=1..k}$ is any other frame spanning the same subspace as $\{X\}_{i=1..k}$ at $p'$. Thus, the parallel transported frame $X$ spans the same subspace as the parallel transported eigenvectors $\{e_i\}_{i=1\dots k}$ at $p'$ (the $X_i$'s are not necessarily eigenvectors, as explained in \cite{sun1990multiple}). By Lemma \ref{lem_smooth_par_transp_field}, the map $p \mapsto \Pi_{m(p)}^p V(p)$ is smooth, for a smooth vector field $V$. We have thus shown that the principal subbundle at $p$ is spanned by a smooth frame around $p$.
\end{proof}

\subsection{Proof of the sub-Riemannian exponential being a local diffeomorphism on the dual subbundle}\label{app_proof_exp_local_diffeo}

We prove the result for a sub-Riemannian structure on a manifold $\mathcal{N}$. The reader may substitute $\mathcal{N} = \R^d$ if they wish.

\begin{proposition}[The exponential is a local diffeomorphism on the dual subbundle] Let $p\in \mathcal{N}$ be arbitrary. 
There exists an open subset $C_p \subset \mathcal{D}^{\star}$ containing $0$ such that $\exp_p^{\mathcal{D}} \vert_{C_p}$ is a diffeomorphism onto its image. That is, $$M_p^{\mathcal{D}}
\defeq \exp_p^{\mathcal{D}}(C_p) \subset \mathcal{N}$$ is a smooth
$k$-dimensional embedded submanifold of $\mathbb{R}^d$ containing \textit{p}.
\end{proposition}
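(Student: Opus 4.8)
The plan is to compute the differential of $\exp_p^{\calD}$ restricted to the $k$-dimensional subspace $\calD_p^{\star}\subset T_p^{\star}\mathcal{N}$ at the origin, show it is injective, and then conclude via the local immersion theorem that a small enough open ball $C_p\ni 0$ in $\calD_p^{\star}$ is mapped diffeomorphically onto an embedded $k$-dimensional submanifold.

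First I would record two preliminaries. (i) $\exp_p^{\calD}$ is smooth on an open neighborhood of $0\in T_p^{\star}\mathcal{N}$: it is the time-$1$ evaluation of the flow of the Hamiltonian vector field of $H$ on $T^{\star}\mathcal{N}$, which is a smooth vector field because the cometric (equivalently $H$) is smooth, so solutions of (\ref{eq_SR_ham}) depend smoothly on initial conditions; the set of covectors $\eta$ for which $\gamma_p^{\eta}$ is defined up to time $1$ is open and contains $0$. (ii) For $\eta$ in this neighborhood, time-homogeneity of Hamilton's equations (Section~\ref{sect_SR_geodesics}) gives $\exp_p^{\calD}(t\eta)=\gamma_p^{\eta}(t)$, and this persists for $t\le 0$ because $H$ is even in the covector variable, so reversing the sign of the initial covector reverses the geodesic. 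Hence $t\mapsto\exp_p^{\calD}(t\eta)$ is a smooth curve through $p$ with velocity $\dot\gamma_p^{\eta}(0)=g_p^{\star}\eta$ at $t=0$, the first line of (\ref{eq_SR_ham}).

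The core step is then the identity, for $\eta\in\calD_p^{\star}$,
\[
d\!\left(\exp_p^{\calD}\big|_{\calD_p^{\star}}\right)_0(\eta)=\frac{d}{dt}\Big|_{t=0}\exp_p^{\calD}(t\eta)=g_p^{\star}\eta .
\]
Here $g_p^{\star}\colon T_p^{\star}\mathcal{N}\to T_p\mathcal{N}$ is the sharp map of the cometric (in the Euclidean case $g_p^{\star}=F(p)F(p)^{T}$, the orthogonal projection onto $\calD_p$); its image is $\calD_p$ and its kernel is the annihilator $\calD_p^{\perp}=\{\xi:\xi|_{\calD_p}=0\}$. Since $\calD_p^{\star}\cap\calD_p^{\perp}=\{0\}$ — indeed $h_p(v,\cdot)$ annihilating $\calD_p$ forces $v=0$ when $v\in\calD_p$, as $h_p$ is positive definite on $\calD_p$ — the map $g_p^{\star}$ restricts to a linear isomorphism $\calD_p^{\star}\xrightarrow{\ \sim\ }\calD_p$ (it inverts the flat map $\calD_p\to\calD_p^{\star}$). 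Therefore $d(\exp_p^{\calD}|_{\calD_p^{\star}})_0$ is injective of rank $k$, so $\exp_p^{\calD}|_{\calD_p^{\star}}$ is an immersion near $0$; shrinking to a sufficiently small open ball $C_p\ni 0$ makes it an injective immersion, hence a smooth embedding. Consequently $M_p^{\calD}=\exp_p^{\calD}(C_p)$ is a $k$-dimensional embedded submanifold of $\mathcal{N}$ containing $p=\exp_p^{\calD}(0)$, and moreover it is swept out by the horizontal normal geodesics emanating from $p$, which gives the "radial horizontality" in (\ref{eq_princ_subm_union_geos}).

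The one genuinely non-formal ingredient is step (ii): identifying the two-sided $t$-derivative of $\exp_p^{\calD}(t\eta)$ at $t=0$ with $\dot\gamma_p^{\eta}(0)$, which rests on time-homogeneity of the Hamiltonian flow and the evenness of $H$ in the covector. Everything else — smooth dependence of ODE solutions on initial data, the kernel/image bookkeeping for $g_p^{\star}$, and the passage from an injective immersion to an embedding on a small ball — is standard, so I expect the write-up to be short.
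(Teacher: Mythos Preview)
Your proposal is correct and follows essentially the same route as the paper: compute $d_0\bigl(\exp_p^{\calD}|_{\calD_p^{\star}}\bigr)(\eta)=g_p^{\star}\eta$ via time-homogeneity, observe that $g_p^{\star}$ is injective on $\calD_p^{\star}$, and then pass from immersion-at-a-point to a local embedding. Your write-up is somewhat more detailed than the paper's (you spell out smoothness of the flow, the kernel/annihilator argument for injectivity, and the two-sided derivative issue), but the core idea and structure are identical.
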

\begin{proof} We will show that $\exp_p^{\mathcal{D}}$ is a local immersion by showing that $d_0 \exp_p^{\mathcal{D}}$ is injective (\cite{lee2013smooth}, Proposition 4.1). For any $\eta \in T_0 \mathcal{D} \cong \mathcal{D}$ it holds that

\begin{align*}
    d_0 \left(\exp_p^{\mathcal{D}} \right) \circ \eta &=    \left.\frac{d}{ds}\right|_{s=0} \exp_p^{\mathcal{D}}(0 + s \eta) \\
    &=   \left.\frac{d}{ds}\right|_{s=0} \gamma_p^{\eta}(s) \\
    &= g^\star(p) \eta,
\end{align*}

\noindent where the second equality uses the fact that the sub-Riemannian exponential satisfies 
$\exp_p^{\mathcal{D}}(s \eta) = \gamma_p^{\eta}(s)$, see corollary 8.36 in 
\cite{agrachev2019comprehensive}. Viewed as a map $g_p^\star: \mathcal{D}^\star \to \mathcal{D}_p \subset \mathcal{N}$ (i.e. as the sub-Riemannian sharp map), $g_p^\star$ is injective on $\mathcal{D}_p^{\star}$ by construction of $\mathcal{D}_p^{\star}$. Thus $\exp_p^{\mathcal{D}}$ is an immersion. This implies the existence of a set $C_p \subset \mathcal{D}_p^{\star}$ containing 0 s.t. 
$\left.\exp_p^{\mathcal{D}}\right|_{C_p}$ is an embedding (\cite{lee2013smooth} Proposition 4.25). Which implies that $M_p^{\mathcal{D}} \defeq \exp_p^{\mathcal{D}}(C_p)$ is an embedded $k$-dimensional submanifold of $\mathcal{N}$.
$p \in M_p^{\mathcal{D}}$ since $\exp_p^{\mathcal{D}}(0) = p$, by definition. 
\end{proof}

\subsection{Expressing the second moment in coordinates}\label{app_second_moment_as_tensor}

For some $v,u \in T_p \mathcal{N}$, the expression $v \otimes u$ can be identified with an endomorphism on $T_p \mathcal{N}$. Its coordinate representation is thus a $d \times d$ matrix. There seems to be some confusion about this in the geometric statisics literature, so we give details below. We first repeat Lemma

\lemTensorCoordinates*
\begin{proof}
The tensor $v \otimes u$ is an element of the tensor product space $T_p \mathcal{N} \otimes T_p \mathcal{N}$. After choosing a Riemannian metric, there is a canonical isomorphism between $T_p \mathcal{N}$ and its dual space, $T^{\star}_p \mathcal{N}$, given by the Riemannian flat map,
$$\musFlat{} : T_p \mathcal{N} \to T^{\star}_p \mathcal{N} : u \mapsto h_p(u,\cdot) := u^{\musFlat{}}.$$ Thus $$T_p \mathcal{N} \otimes T_p \mathcal{N} \cong T_p \mathcal{N} \otimes T^{\star}_p \mathcal{N},$$ where elements of the latter space are denoted $(1,1)$ tensors. Furthermore, there is a canonical isomorphism, independent of a Riemannian metric, 
$$T_p \mathcal{N} \otimes T^{\star}_p \mathcal{N} \cong \text{End}(T_p \mathcal{N}),$$ where $\text{End}(T_p \mathcal{N})$ is the space of endomorphisms on $T_p \mathcal{N}$. This isomorphism is given by the map $\Phi$ which takes an endomorphism $A$ to the $(1,1)$ tensor $\Phi(A)$ that acts on $w \in T_p \mathcal{N}$ and $\eta \in T^{\star}_p \mathcal{N}$ by $\Phi(A)(w,\eta) = \eta(Aw)$. The linear map corresponding to a (1,1) tensor of the form $v \otimes u^{\star}, v\in T_p \mathcal{N}, u^{\star} \in T^{\star}_p\mathcal{N},$ is $w \mapsto \Phi^{-1}(v \otimes u^{\star})(w) 
= v \cdot \hspace{0.5mm} u^{\star}({w}) $, i.e. a scaling of $v$ by $u^{\star}(w) \in \R$. 

After choosing a basis for $T_p \mathcal{N}$, the tangent vectors $v, w$ can be represented as column vectors $v, w \in \R^{d \times 1}$. The flat map can be represented by the matrix $h_p$, which is the matrix representation of the Riemannian metric at \textit{p}. After identifying covectors with row vectors (i.e. coordinate representations of linear maps from $T_p \mathcal{N}$ to $\R$), $u^{\musFlat{}}$ can be represented as the row vector $u^{\musFlat{}} = (h_p u)^T \in \R^{1 \times d}$. This acts on $w$ by $u^{\musFlat{}}(w) = \left(h_p \hspace{0.5mm} u\right)^T w$. Thus, w.r.t. some chosen basis, the matrix representation of our desired endomorphism is given by 
\begin{align*}
\Phi^{-1}(v \otimes u^{\musFlat{}})= v (h_p u)^T = vu^T h_p.    
\end{align*}

\end{proof}

\subsubsection{Verifying independence of the coordinate system}

Let $Q$ be the change-of-basis matrix from basis $a$ of $T_p \mathcal{N}$ to basis $b$. Then $Q^{\star} = (Q^T)^{-1}$ is the corresponding change-of-basis matrix from basis $a^{\star}$ to $b^{\star}$ for $T^{\star}_p \mathcal{N}$, where these bases are dual to $a, b$. Thus, the change of basis of tangent vector $v$ from \textit{a} to \textit{b} is computed as $v_b = Q_{ab} v_a$. The flat map $\musFlat{}$ is a linear map from $T_p \mathcal{N}$ to $T^{\star}_p \mathcal{N}$, so if $(h_p)_a$ is its representation w.r.t. bases \textit{a} and $a^\star$, then its representation w.r.t. bases \textit{b} and $b^\star$ is computed as $$(h_p)_b = Q^{\star}(h_p)_a Q^{-1} = (Q^T)^{-1}(h_p)_a Q^{-1}.$$ 

We verify that the change of basis of the individual elements $u,v,h_p$ matches the change of basis of 
the matrix (as a linear map) (\ref{eq_matrix_repr}):
\begin{align}
    v_b u_b^T (h_p)_b &= Q v_a (Q u_a)^T (Q^T)^{-1}(h_p)_a Q^{-1} \\
    &= Q v_a u_a^T (h_p)_a Q^{-1}.
\end{align}

As opposed to this, the expression $v_b u_b^T$ does \textit{not} transform properly under basis change: $v_b u_b^T = Q v_a (Q u_a)^T  = Q v_a  u_a^T Q^T$ is only equal to $Q v_a  u_a^T Q^{-1}$ if $Q^T = Q^{-1}$, i.e. if the basis change matrix is orthogonal, meaning that it only rotates the basis.

\section{Notes on implementation}\label{app_implementation}


At each step of the integration of a geodesic, eigenvectors needs to be computed at the current position \textit{p}. This involves evaluating the kernel $K_{\alpha}(\vert x_i - p \vert)$ for all $i = 1..N$. For large datasets, we suggest to do this using libraries specialized at such kernel-operations, such as KEOPS, as well as automatically filtering out points far away from $p$ whose weight will be close 0 anyway. We have not had the need to implement these optimizations in order to run the examples of Section \ref{sect_applications}.

The integration of the \textit{L} geodesics in the algorithm for the principal submanifold can be parallelized; the computation of each one is independent from the rest. Again, we have not had the need to do this for running our experiments.

\subsection{Choice of integration scheme}\label{app_integration}

The integration of Hamilton's equations can be done using a symplectic integration scheme which aims at keeping the Hamiltonian constant. A constant hamiltonian is equivalent to constant speed, cf. eq. (\ref{eq_constant_speed_ham}). This is desired because the computation of curve length and distance via eq. (\ref{eq_ham_vs_distance}) assumes constant speed. We compared ordinary Euler integration to semi-implicit Euler (see e.g. \cite{hairer2006geometric}), a first-order symplectic integrator, and found the Hamiltonian to be better preserved using ordinary Euler integration in our experiments.

\section{Choosing the kernel range $\alpha$ and bundle rank $k$}\label{app_hyperparam}

Firstly, note that these parameters can be considered to be a modelling choice, expressing the scale at which we want to analyze the data - what scale of variation to take into account. However, one can aim to find the 'lowest level of variation that is not due to random noise'. Secondly, note that the 'optimal' value of one hyperparameter depends on the value of the other. Since the rank \textit{k} takes a finite number of values $k \in \{1, \dots, d-1\}$, we suggest to start by estimating this. See \cite{bac2021scikit} for a survey and benchmarking of different methods. Given an estimated \textit{k}, we suggest to select a range parameter for which the separation between eigenvalues $\lambda_k$ and $\lambda_{k+1}$ is the most clear on average. The optimal kernel range depends on the level of noise and the rate of change of the affine subspace $\calE_p$ as a function of $p$, which, in the case of the manifold hypothesis, is an expression of the curvature of the underlying manifold. A fast varying $\calE$ calls for a smaller $\alpha$, while high levels of noise as well as a lower number of observations calls for a larger $\alpha$.

\section{Algorithm for combining principal submanifolds for 2D surface reconstruction}\label{app_combining_PS}

In this section, we present an algorithm for combining principal submanifolds $\{M^k_{\mu_j}(r_j)\}_{j=1..l}$ based at different base points $\mu_j, j=1\dots l$. In this case, $k=2$ and we'll write $M_{\mu_j}$ instead of $M^2_{\mu_j}$. Given a point $x\in \R^3$, the algorithm first projects $x$ to a set of nearest principal submanifolds, and then represents $x$ as a weighted average of these projections, weighted by the SR distance between a projection and its corresponding base point. The point $x$ can e.g. be an observation, $x \in \{x_i\}_{i=1..N}$, or a point in a principal submanifold, $x\in M_{\mu_j}$. The algorithm can then be run for each point $x$ in $\{x_i\}_{i=1..N}$ or in $M_{\mu_j}, j = 1..l$.

The point sets representing principal submanifolds $M_{\mu_j}(r_j), j=1\dots l,$ are generated by Algorithm $1$. For each point $p \in M_{\mu_j}(r_j)$, we assume that the corresponding initial cotangent $\eta(p) \in \calE_{\mu_j}^\star$ has been stored.

Apart from the hyperparameters of the principal subbundle and submanifolds, the algorithm needs a 'threshold parameter' $\epsilon > 0$. $x$ will not be projected to principal submanifold $M_{\mu_j}$ if the distance between $x$ and its projection $\hat{x}_j$ to $M_{\mu_j}$ is greater than $\epsilon$. Thus, the size of $\epsilon$ should be comparable to an estimate of the noise-level in the point cloud.

The algorithm is the following.

\begin{enumerate}
	 \item \textit{Project to each submanifold:} project $x$ to each $M_{\mu_j}(r_j), j =1..l$, wrt. Euclidean distance, i.e. find the closest point in $M_{\mu_j}(r_j)$ w.r.t. Euclidean distance. Denote this projection of $x$ to $M_{\mu_j}(r_j)$ by $\hat{x}_{j}$. Denote the corresponding initial cotangent by $\eta(\hat{x}_{j})$ and the distance by $d_{j} \defeq d(\mu_j, \eta(\hat{x}_{j})) = \Vert \eta(\hat{x}_{j}) \Vert $. 
    \item \textit{Filter out projections: } let $B \defeq \{j \in \{1, \dots, l\} \hspace{0.5mm} \big\vert \hspace{0.5mm} |x - \hat{x}_{j}|  <  \epsilon\}$ consist of indices of the basepoints satisfying that the projection of $x$ to $M_{\mu_j}$ is sufficiently close to $x$. 
    \item \textit{Rescale distances: } set $\tilde{d}_{j} \defeq d_{j} \cdot 1/s_j(d_{j})$, where $s_j$ is a continuous, decaying bijection with domain and image given by $s_j : [0, r_j] \to [0,1].$ We suggest to use the affine function satisfying these constraints.
	
    \item \textit{Compute weighted average: } the weighted representation of $x$ is now computed as $$\hat{x}_{} = \frac{1}{\sum_{j \in B} w_j}  \sum_{j \in B} w_j \hat{x}_{j},$$ where (unnormalized) weights $w_j$ are  given by $$w_j(x) = e^{-(\tilde{d}_{j} - \tilde{d}_{j^{\star}})^2/(2 \sigma)}, j=1\dots |B|,$$ and  $j^{\star} \defeq \argmin_{j \in B} d_{j}$ is the index of the  principal submanifold that is closest w.r.t. SR distance. The standard deviation $\sigma$ in $w_j$ controls how fast the weights should go to zero. A general-purpose choice is $\sigma = \max_{j \in \{1, .., l\}} r_{j}\}$.
	  
    \end{enumerate}

\section{Supplementary figures}\label{app_PS_experiments}

\subsection{Illustration of overlapping submanifolds}\label{app_overlapping}

Figure \ref{fig_overlap} is a supplement to figure \ref{fig_surface_recon}, zooming in on the region of overlap between the two principal submanifolds. 

\begin{figure}[h]
  \centering
  \includegraphics[scale=0.35]{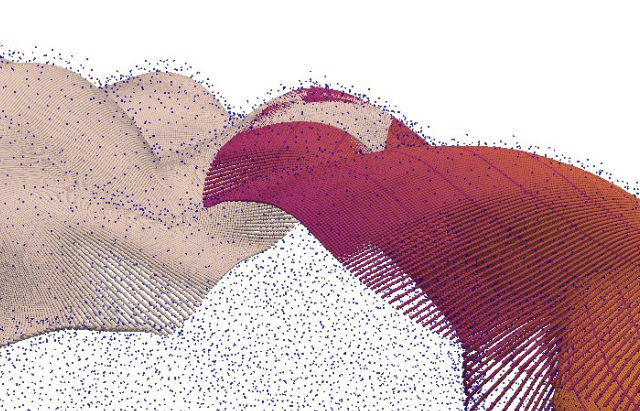}
  \caption{Supplementary figure to Figure \ref{fig_surface_recon}, zooming in on the region where the two submanifolds overlap. The left, beige submanifold in this figure is the purple one in Figure \ref{fig_surface_recon}, recolored to be able to distinguish more clearly the two submanifolds.}
  \label{fig_overlap}
\end{figure}

\subsection{Reconstruction of head sculpture surface under noise level 3 out of 3}\label{app_surface_recon_high_noise}

Figure \ref{fig_surface_recon_high_noise} illustrates the reconstruction of the face of the 'head sculpture' (from the benchmark dataset described in \cite{surveySurfacReconstructionHuang}), with noise level 3 out of 3. The parameters are the same as for the experiment described in section \ref{sect_surface_recon} except for a slightly larger kernel range.

\begin{figure}[h!]
  \centering
  \begin{tabular}{@{}c@{}}
    \includegraphics[scale=0.33]{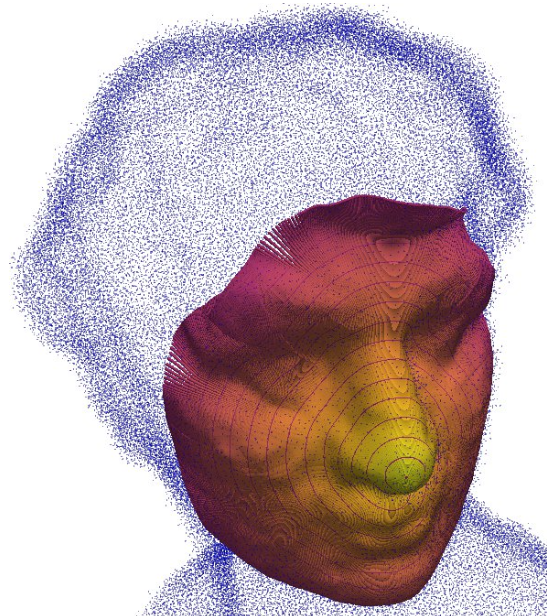} \\[\abovecaptionskip]
    \small (a) Frontal view.
  \end{tabular}

  \vspace{\floatsep}

  \begin{tabular}{@{}c@{}}
    \includegraphics[scale=0.33]{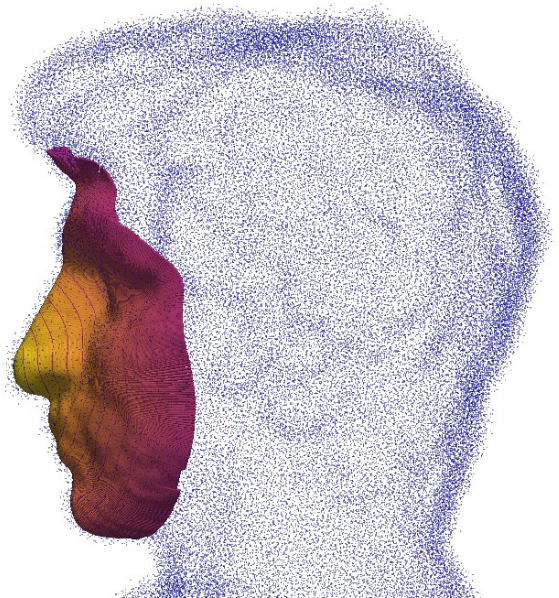} \\[\abovecaptionskip]
    \small (b) Side view.
  \end{tabular}

  \caption{Figures \textit{(a)} and \textit{(b)} show a principal submanifold recontructing the 'head sculpture' surface from a point cloud (purple points) with noise level 3 out of 3. The submanifold is centered around the tip of the nose. The figure shows the raw points generated by Algorithm \ref{algo_principal_submanifold} - no subsequent processing, apart from coloring, has been applied. The skewed circles on the face are geodesic balls, i.e. points on the same circle has the same SR distance to the center point. The colors of the face depends on the SR distance to the base point at the tip, a lighter color signifies shorter distance.}\label{fig_surface_recon_high_noise}
\end{figure}

\subsection{Illustration of the log map on a 4-dimensional sphere in $\R^{50}$}\label{app_log}

Figure \ref{fig_log_sphere} shows a single computed geodesic, found by solving the log problem $\log_p(q)$, for $p,q$ and observations as described in section\ref{sect_learning_distance_metric}. The distance $d^\calE(p,q)$ is estimated as the length of the computed geodesic. The blue points are observations on the $4$-dimensional sphere embedded in $\R^50$ projected to $\R^3$.

\begin{figure}[h!]
  \centering
  \includegraphics[scale=0.5]{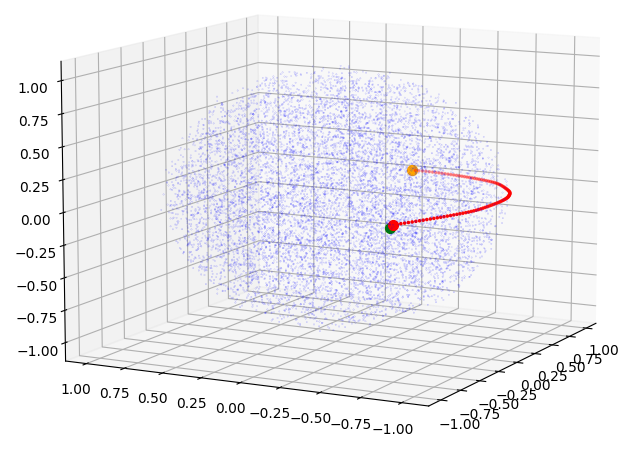}
  \caption{Illustration of a single computed geodesic found by solving the log problem $\log_p(q)$, for $p,q$ and observations as described in section \ref{sect_learning_distance_metric}. The blue points are observations on the $4$-dimensional sphere embedded in $\R^{50}$ projected to $\R^3$. The orange point is the initial point $p$, the red points are points on the geodesic, the green point is the target point $q$.}
  \label{fig_log_sphere}
\end{figure}

\end{document}